\numberwithin{equation}{section} \theoremstyle{plain}
\newtheorem{proposition}{Proposition}[section]
\newtheorem{assumption}{Assumption}[section]
\newtheorem{example}{Example}[section]
\newtheorem{definition}{Definition}[section]
\newtheorem{remark}{Remark}[section]
\newtheorem{condition}{Condition}[section]
\newcommand{\gai}[1]{{#1}}
\def\ps@pprintTitle{%
  \let\@oddhead\@empty
  \let\@evenhead\@empty
  \let\@oddfoot\@empty
  \let\@evenfoot\@oddfoot
}
\begin{document}

\newcommand\tabfig[1]{\vskip5mm \centerline{\textsc{Insert #1 around here}}  \vskip5mm}

\vskip2cm

\begin{center}
  \begin{minipage}{0.9\linewidth}
    \begin{flushleft}
      {\Large\bf    Improving Value-at-Risk prediction  under model uncertainty} \\[5mm]
      {\sc Shige Peng}\\
      \quad Institute of Mathematics
      \\ \quad Shandong University
      \\[2mm]
      {\sc Shuzhen Yang}\\
      \quad ZhongTai Securities Institute for Financial Studies
      \\
      \quad Shandong University
      \\[2mm]
      {\sc Jianfeng Yao}\\
      \quad Department of Statistics and Actuarial Science
      \\
      \quad The University of Hong Kong
      \\[2cm]
      {\em Correspondence:} \\
      \quad {\sc Jianfeng Yao}\\
      \quad Department of Statistics and Actuarial Science \\
      \quad The University of Hong Kong \\
      \quad Pokfulam Road\\
      \quad HONG KONG SAR \\
      \quad Email: \texttt{\small jeffyao@hku.hk}
    \end{flushleft}
  \end{minipage}
\end{center}

\newpage

\begin{abstract}
  Several well-established benchmark predictors exist for
   Value-at-Risk (VaR), a major  instrument for financial
  risk management. Hybrid methods combining
  AR-GARCH filtering with skewed-$t$ residuals and the extreme value
  theory-based approach are particularly recommended.
  This study introduces yet another VaR predictor,   G-VaR, which follows
  a novel methodology. Inspired by the recent mathematical theory of
  sublinear expectation, G-VaR is built
  upon the concept of model uncertainty, which in the present case
  signifies that  the inherent volatility of financial returns cannot be
  characterized  by a single distribution but rather by infinitely many statistical
  distributions.
  By considering the worst scenario among these potential distributions,
  the G-VaR predictor is precisely identified.
  Extensive experiments
  on both the NASDAQ Composite Index and  S\&P500 Index demonstrate
  the  excellent performance of the G-VaR predictor, which is
  superior to most existing  benchmark  VaR predictors.
\end{abstract}

\bigskip

\noindent KEYWORDS:   Empirical finance,
  G-normal distribution,
  Model uncertainty,
  Sublinear expectation,  Value-at-Risk.

\bigskip

\noindent JEL Classification: C58, G32.


\section{Introduction}

Since its birth at J.P. Morgan in the 1990s,
 value-at-risk (VaR)  has  become one of
the most used  (if not THE most used) instruments for assessing
downside risk in  financial markets.  Every unit of risk management in
today's financial industry routinely implements several  VaR indicators to monitor its business \citep{Jorion07}.
The regulatory authorities also incorporate VaR measures into their recommendations to
the banking industry (Basel Accords I-III), which has accelerated the
spread of  VaR.

The success of VaR methodology is also {backed} up by   a  rich body of
literature in which the methodology is carefully evaluated and discussed in different
model settings and for different markets and products.
The  literature on VaR is  voluminous and includes several
specialized books. Also any textbook treating  financial econometrics or risk
{management will have a}  chapter dedicated to  VaR.
For an up-to-date  account of this literature, the reader is referred to the recent review papers by
\cite{Kuester06},
\cite{Jorion10},
\cite{Abad14},
\cite{NadaChan16}, and
\cite{Zhang17}, among others.
In particular, Table 4 in \cite{Abad14}
lists as many as fourteen papers  that survey  and
compare different VaR methodologies  through empirical studies.
Further,  by focusing on  univariate observations, \cite{Kuester06}
offer a rich  review of  mainstream VaR measures and
provide an extensive empirical comparison of those measures in terms of their
prediction power using the daily NASDAQ Composite Index.
The general conclusion they draw   \citep[see also][]{Abad14} is that
whatever  method  is used for VaR modeling, the predictions are always improved,
 most of the time considerably
improved,  by applying that method  to residuals {\em filtered} by an
AR-GARCH model instead of  the original series $(r_t)$.
For example, one of
the best performers is obtained by applying extreme value
theory (EVT) to the
residuals of  AR-GARCH fit using skewed-$t$ innovations (AR-GARCH
St-EVT).
\cite{Kuester06} conclude that, at least for the NASDAQ Composite Index,
``conditionally heteroskedastic models yield acceptable forecasts'' and
that the conditional skewed-$t$ (AR-GARCH-St) together with
the conditional skewed-$t$ coupled with EVT  (AR-GARCH-St-EVT) perform
best in general.

In this paper, we present an entirely  new type of VaR. Our
methodology is inspired by  a rigorous mathematical theory called
{\em nonlinear expectation}.
{The theory of  {nonlinear expectation} was originally introduced in}
\cite{Peng2004,Peng2006,Peng2008,Peng2010}.
The  part of the theory relevant to
VaR prediction is detailed  in the appendix.
When applied to  analysis of a time-series of returns $\{r_t\}$,
the central
concept of {nonlinear expectation theory}  is {\em an {infinite} family of distributions} inherent in the
return series $\{r_t\}$.
 Traditional econometric modeling commonly assumes
that returns,  at least during certain time periods,
obey   {\em one} stochastic process  governed by {\em one}
stochastic-process model  $P_0$.  The
task of the econometrician  is to infer this unknown, but {\em true},
$P_0$.   The distribution can be parametric, as in an AR-ARCH model (with
skewed-$t$ or normal innovations),  or made up by a family of
conditional mixture distributions (see Section~\ref{sec:review}).  It can also be
fully nonparametric without any particular model specification,  as in the  historical simulation (HS) approach  to VaR
prediction.  However, a unique  stochastic model  is assumed {so far} for the
returns $\{r_t\}$.
The point of view  of {nonlinear expectation} theory is radically
different:  instead of assuming the existence of one unique  model
$P_0$,  it views   returns as originating from  a large number of
different models, say $\{P_\theta\}_{\theta\in\Theta}$, and this family
of potential models is indeed infinite (here, $\Theta$
denotes some  imprecise index set).
The rationale is that data  under investigation such as return series
are of a complex nature such that no single stochastic model or
distribution can
serve as a perfect model:
{\em model uncertainty} has to be considered, and any statistical
inference has to take into account such uncertainty.
We name this vision of complex data and the implied methodology
{\em data  analysis under model uncertainty}.

The concept of model uncertainty, sometimes also referred to as {\em model
ambiguity}, has taken a long time to emerge. An early attempt in this
direction was made in the area of robust statistics, where it was argued
that a statistical procedure (e.g. parameter estimation or hypothesis
testing)  can gain robustness by assuming that the data follow not a
single distribution but rather a family of distributions
$\{F_\theta\}_{\theta\in\Theta} $
\citep[see][]{Huber81,Walley91}.
{
  {By allowing a variable range  between two extreme values
    $\mathrm{\sigma}_{min}$ and $\mathrm{\sigma}_{max}$ for volatility
    of a stock},
  \cite{ALP95} presented a model for pricing and hedging derivative
  securities and option portfolios. When the volatility is unknown and
  takes value in some convex region { that can vary with the
    price process and  time}, \cite{L95} introduced optimal and risk-free strategies for intermediaries in such markets to meet their obligations.}
\cite{P97} later proposed a formal mathematical  approach to  model uncertainty, with
a nonlinear expectation called $g$-expectation
 introduced to develop the concept of
{\em  mean uncertainty} and its associated mathematical tools.
The  $g$-expectation concept was  then adopted by
\cite{CE02}
to describe the continuous-time inter-temporal version of
multiple-priors utility.  In particular, they  established
a separate premium for ambiguity on top of the traditional  premium
for risk.
In addition, \cite{EJ13} formulated a
model of utility in  a continuous-time framework that captures
aversion to ambiguity about both the volatility and the mean of
returns via {the theory of {\em sublinear expectation}
  (SLE). Note that sublinear expectation is a useful form of  nonlinear
  expectation reinforced by the addition of a sub-addivity property.}
All of the above
theories formalize an  inherent  family of distributions involving a   set of probability measures
$\{P_{\theta}\}_{\theta\in \Theta}$,  not just one probability
measure $P_0$, that governs the
statistical distributions of a  {dataset}.
In  related work, \cite{A99} proposed the concept of {\em coherent
  risk measures}, { which focused on measurement of both market risks and nonmarket risks,
 see also  \cite{FS11}. }

Although the vision of model uncertainty has been formalized through the rigorous
mathematical theory of SLE, its implications for real
data analysis  have not been fully explored. { Early work by Huber
\citep{Huber81} studied the model uncertainty in the area
of robust \gai{statistics}. \cite{C06} considered a quantitative framework
for defining model uncertainty in option pricing models, and
illustrated the difference between model uncertainty and the more
common notion of "market risk" through examples. By distinguishing
estimation risk and misspecification risk, \cite{KMS10} proposed a
procedure to take model risk into account in the computation of
capital reserves which can be used to address Value-at-Risk and
expected shortfall.} To the best of our knowledge, \gai{the present paper}
on VaR prediction constitutes the first
attempt at real-life data analysis under SLE-based model uncertainty.
The implementation of  SLE theory herein leads to a new
type of VaR predictor called  {\em $G$-VaR}.
Loosely speaking, as long as VaR prediction is concerned and to give
model uncertainty in the form of an infinite family of probability models
$\{P_\theta\}_{\theta\in\Theta} $,  G-VaR concentrates on
prediction under  the {\em worst scenario} among  all potential models $\{P_\theta\}$.
Extensive empirical analyses of two major  market indexes, namely, the
NASDAQ Composite  Index and S\&P
500 Index,  establish the superiority of  the new
G-VaR predictions over several benchmark  VaR predictors
that are among the best performers reported in \cite{Kuester06}.
The uniform superiority of  G-VaR {in these empirical studies} is truly astonishing. One {\em a
posteriori}  explanation is that these return data do have the kind
of complex nature that can be better understood through the lens of
model uncertainty that had led to the  G-VaR predictor.


The remainder of the paper is organized as follows.
Section~\ref{sec:review} briefly reviews several benchmark predictors
for the VaR of return series.
Section~\ref{sec:Gnormal} introduces the concepts of distribution
family as model uncertainty and  G-normal distribution.
{Section~\ref{sec:GVaR}  contains the main technical contribution
  of the paper.   The new VaR predictor, i.e. G-VaR, is
  introduced under  model uncertainty and its theoretical properties established.
}
The  implementation of G-VaR
is presented in Section~\ref{sec:implem}, in
which  consistent estimators are proposed for the parameters
involved in the G-VaR  predictor.
{Note that this implementation is far non trivial and
  Section~\ref{sec:implem} contains the main methodological
  contribution of the paper.
}
Section~\ref{sec:empirical} reports the  empirical results of the
G-VaR predictor for   the
NASDAQ Composite Index and S\&P 500 Index, with
extensive comparison made with the benchmark VaR predictors reviewed
in Section~\ref{sec:review}, including
the AR(1)-GARCH(1,1)-Normal, AR(1)-GARCH(1,1)-Skewed-t and
AR(1)-GARCH(1,1)-Skewed-t-EVT predictors. Finally,
Section~\ref{sec:concl} concludes.

\section{A brief review of benchmark predictors for  VaR}
\label{sec:review}

Before introducing our new methodology, we first
give a brief review of several of the  well-documented  VaR predictors described in
\cite{Kuester06}, as they  will serve as  benchmarks for comparison with
the new VaR predictor proposed herein.
As historical
references for these VaR measures  can be found in the earlier review paper,
  only a few key references are indicated in this  brief review.
Here, $(r_t)$ denotes a univariate time-series  for which VaR
prediction is required. In most common situations, the series
represents the daily returns of a market, or price of a stock or product.

\medskip\noindent{ \em (i) The Historical Simulation method.}\quad  This traditional method uses  sample
  quantiles from historical data to predict  VaR.  The method  is well documented  in classic books such as
  \cite{Dowd02} and \cite{Christoffersen03}.

  A variant of  HS is  filtered historical simulation (FHS), whereby the
  sample quantiles are calculated from filtered residuals using a
  parametric model such as the  AR-GARCH model.  Classical references on
  FHS  include \cite{BGV99,BGV02}.

\medskip\noindent{\em (ii) The method of Peaks over Thresholds using EVT.}\quad
  EVT provides a method for estimating the high upper
  quantiles of a variable $X$, say quantiles $x_\alpha$ such that
  $\overline{F}_X(x)=P(X>x_\alpha)=\alpha$ for some small tail
  probability $\alpha < 0.1$. From sample data, one
  obtains  an empirical quantile $u$, {\em the threshold}, such that
  $P(X>u)\approx 0.1$.
  Also those values above  threshold $u$ provide a
  sample for the ``survival distribution'' $ (X | X>u) $. EVT
  ensures  that for a large enough $u$, the survival distribution can
  be approximated by a generalized Pareto distribution (GPD) that
  depends on a pair of shape-scale parameters $(\beta,\xi)$
  \citep{Pickands75,EKM97}.
  This GPD is thus identified using the sample, which leads to an
  estimate for the initial tail probability $\overline{F}_X(x_\alpha)$ for all $x_\alpha>u$.

  For VaR prediction, the above procedure is applied to the available data
  $\{X_s= -r_s, ~s<t\}$
  to find the corresponding upper quantile $x_\alpha$. The negative
  operation is used here, as EVT considers  upper quantiles whereas VaR
  targets lower quantiles. Then, the
  prediction for  VaR is    $\widehat{\text{VaR}}_{\alpha}(t) = -
  x_{\alpha} $ for all risk levels $\alpha <0.1$. Empirical studies on VaR
  predictions using EVT can be found in \cite{McNeil00} and \cite{Kuester06}.

\medskip\noindent{\em (iii) The method by AR-GARCH filtering (AR-GARCH).}\quad
Here the  returns are assumed to follow a mean-variance decomposition of
type
  \begin{equation}
    \label{eq:mean-var}
    r_t = \mu_t + \sigma_t z_t,
  \end{equation}
  where the mean process $(\mu_t)$ follows an AR model (or, more
  generally, an ARMA model) and the residual is modeled by a GARCH
  process \citep{Bollerslev86} with independent and identically  distributed (i.i.d.)
  innovations $(z_t)$. Common choices for the distribution
  $f_z$ of the innovations $(z_t)$ are (i) standard normals,
  (ii) Student's $t$-distributions, and (iii) skewed
  $t$-distributions.

  Once the model \eqref{eq:mean-var} is fitted, a parametric estimate
  of the  distribution $f_z$ is obtained, say $\hat f_z$, which leads
  to an estimated quantile function, say $\hat Q_\alpha(z)$ (for any
  given risk level $\alpha$).  A level-$\alpha$ VaR prediction at time $t$ is
  thus defined as
  \[     \widehat{\text{VaR}}_{\alpha} (t)=
    - \left\{\hat{\mu}_{t}  +\hat \sigma_{t}\hat Q_\alpha(z) \right\}.
  \]

\medskip\noindent{\em (iv) The method of Conditional Mixture Modeling.}\quad
In this approach, conditional to the
  information set  ${\cal F}_{t-1}$ at time $t$, the return  $r_t$
  follows a mixture distribution with $n$ components, each of which has
  a constant mean parameter $\mu_{j}$ ($1\le j\le n$) and  time-varying
  volatility (variance) parameter $\sigma^2_{j,t}$ ($1\le j\le
  n$).
  Moreover, the $n$-dimensional volatility process
  $\boldsymbol{\sigma}_t^2=(\sigma^2_{1,t},\ldots,\sigma^2_{n,t})$ obeys a
  multidimensional GARCH equation.
  Here, the distributions of the mixture components are usually taken to be normal
  distributions or generalized exponential distributions (GED). More
  references to this approach can be found in \cite{Haas04}.

\medskip\noindent{\em (v) The method by Quantile Regression.}\quad
Here a regression model is used to predict the
VaR at time $t$ using some predictable covariates ${\bf X}_t\in{\cal
    F}_{t-1}$; see
  \cite{Koenker78} and \cite{Chernozhukov01}.
  Later,  \cite{Engle04} proposed  CAViaR models, where the quantiles (or VaRs) follow an
  autoregressive model without any exogeneous covariate.

\section{G-normal distribution}
\label{sec:Gnormal}

When measuring the risk in a financial time-series $\{X_t\}_{0\leq t\leq
  T}$, it is commonly assumed that the data follow a certain  distribution
$F_0$, and the aim is   to  estimate or approximate the ``true'' distribution $F_0$ or
some of its  characteristics such as  VaR. As we saw in the survey in
Section~\ref{sec:review}, many different VaR measures exist in the literature,
including  HS, EVT-VaR and their AR-GARCH-filtered variants.
Now we view the data $\{X_t\}_{0\leq t\leq T}$ as possessing a complex
nature governed  not by one distribution $F_0$ but rather by an infinite family
of distributions $\{F_{\theta}\}_{\theta\in \Theta}$, each of them
capturing some  properties of
the data.

How can the VaR concept  be  extended to a new framework  in which  an
infinite family of (unknown) distributions governs the data?
This paper provides an answer to this general
question.
To proceed, some  meaningful characteristics of the  family
$\{F_{\theta}\}_{\theta\in \Theta}$ need to be identified.  Consider
the simplest features of the distributions, namely,  their mean
$\mu$ and  variance $\sigma^2$. In general, these features
are time-varying.
Here, we consider a simple case, where  mean
$\mu$ is constant (independent of time) and  variance $\sigma^2$ is
time-varying within some
interval
$[\underline{\sigma}^2,\overline{\sigma}^2]$. In the following,
the interval $[\underline{\sigma}^2,\overline{\sigma}^2]$ is used to
characterize the unknown family of distributions $\{F_{\theta}\}_{\theta\in \Theta}$. For a given
canonical  probability space $(\Omega,\mathcal{F},P)$,
$\Omega=C([0,T])$, and Brownian motion $\{B_t\}_{0\leq t\leq T}$,  we
define the probability measures $P_{\theta}$ as follows.  For $A\in\mathcal{F}$,
\[
P_{\theta}( A )=P \circ Y_{\theta}^{-1}(A)=P(Y_{\theta}\in A),
\]
where
$$
Y_{\theta}(\cdot)=\int_0^{\centerdot}\theta_sdB_s,\quad \theta\in
\Theta=L^2_{\gai{\cal F}}(\Omega\times[0,T],[\underline{\sigma},\overline{\sigma}]),
$$
\gai{
where $\Theta$ is the set of all progressively measurable processes
taking value on $[\underline{\sigma},\overline{\sigma}]$}
 The collection of  $P_{\theta}$s is denoted as  $\{P_{\theta}\}_{\theta\in \Theta}$.
 Let the mean $\mu$ be $0$ and the distribution of $\{X_t\}_{0\leq
   t\leq T}$ under $P_{\theta}$ be  $F_{\theta}$. Thus, this
 \gai{infinite} family of
 stochastic process  distributions $\{F_{\theta}\}_{\theta\in \Theta}$
 is chosen as the family governing the  dataset
 $\{X_t\}_{0\leq t\leq T}$. In this paper,  we use the so-called
 G-normal distribution
 $N(0,[\underline{\sigma}^2,\overline{\sigma}^2])$ to represent the family
 $\{F_{\theta}\}_{\theta\in \Theta}$.

 Precisely,  the expectations of data $\{X_t\}_{0\leq t\leq T}$ under
 $\{P_{\theta}\}_{\theta\in \Theta}$ are
 \begin{equation}
   \label{eq:E_th}
   {E}_{\theta}[\phi(X_t)]=\int_{R}\phi(x)dF_{\theta}(x),
 \end{equation}
 where $\phi\in C_{l.Lip}(\mathbb{R},\mathbb{R})$ is a test function
 describing the statistic of the data  $X_t$  that we are interested
 in. With  VaR prediction in view, we concentrate our analysis on
 the {\em  worst-case expectation} of $X_t$ under
 $\{P_{\theta}\}_{\theta\in \Theta}$, that is,
 \begin{equation}
   \label{eq:EE}
   \mathbb{E}[\phi(X_t)]=\sup_{\theta\in \Theta}E_{\theta}[\phi(X_t)].
 \end{equation}
 In general, it is difficult to determine   worst-case expectation
 $ \mathbb{E}[\phi(X_t)]$. There is, however, a situation in which  it can
 be explicitly determined, as shown below.

 \begin{assumption} \label{ass:SDE}
   Suppose that $\{X_t\}_{0<t<T}$ satisfies  the following stochastic differential equation,
   $$
   dX_t=\theta_tdB_t,\quad X_0=x,
   $$
   under $P_{\theta}$, $\theta\in \Theta=L^2_{\gai{\cal F}}(\Omega\times[0,T],[\underline{\sigma},\overline{\sigma}])$.
 \end{assumption}
 For a given $\phi$, we can prove that $ u(t,x)=\mathbb{E}[\phi(X_t)]$
 satisfies the following partial differential equation (PDE),
 {
 \begin{equation}
   \label{pde-1}
   \partial_tu(t,x)-G(\partial_{xx}^2u)=0,\ t\geq 0,\ x\in \mathbb{R},
 \end{equation}
 }
{  where \gai{the} function $G(\cdot)$ is defined as
\begin{equation}
  \label{eq:G-function}
  G(a) = \frac12  \left( \overline{\sigma}^2 a^+ -
    \underline{\sigma}^2a^-   \right),
  \quad a^+=\max(a,0), \quad \text{and} \quad a^-=\max(-a,0).
\end{equation}}If, in addition,
 $\phi(\cdot)$ is convex on $\mathbb{R}$,  then the following explicit
 solution exists for the equation (\ref{pde-1}):
 $$
 u(t,x)=\int_{-\infty}^x\phi(y)\frac{1}{\sqrt{2\pi t \overline{\sigma}^2}}\exp(\frac{y^2}{2t\overline{\sigma}^2})dy.
 $$
 In this case, we can see that the convex function $u(t,x)$ will reach
 the maximum at parameter $\overline{\sigma}$. Similarly, if
 $\phi(\cdot)$ is concave on $\mathbb{R}$, then the explicit solution becomes
  $$
 u(t,x)=\int_{-\infty}^x\phi(y)\frac{1}{\sqrt{2\pi t \underline{\sigma}^2}}\exp(\frac{y^2}{2t\underline{\sigma}^2})dy.
 $$

 As a time-series of returns $\{X_t\}_{0\leq t\leq T}$ is typically
 centered { whereas }  its volatility (variance) is time-varying, we will
 hereafter assume that it satisfies Assumption~\ref{ass:SDE} under
 model uncertainty $\{P_\theta\}_{\theta\in\Theta} $ and follows a
 G-normal
 distribution $N(0,[\underline{\sigma}^2,\overline{\sigma}^2])$ (the
 reader is reminded that this is {\em not} a classical probability
 distribution). The definitions of maximal distribution and unbiased estimator are given as follows:

{
  \begin{definition}
    (Maximal distribution) A random \gai{variable}  $\eta$ on a
    sublinear expectation space $(\Omega,\mathcal{H},\mathbb{E})$ is
    called maximally distributed if there exists \gai{an} interval  $[\underline{\mu},\overline{\mu}] \subset \mathbb{R}$ such that
$$
\displaystyle \mathbb{E}[\phi(\eta)]=\max_{y\in [\underline{\mu},\overline{\mu}]}\phi(y),\quad \forall \phi\in C_{l.Lip}(\mathbb{R}).
$$
\end{definition}

\begin{definition}
Let $X_1,\cdots ,X_n$ be i.i.d. sample (under SLE) of size $n$ from the
maximal distribution with interval
$[\underline{\mu},\overline{\mu}]$. We say $f(X_1,\cdots ,X_n)$ is an
unbiased estimator of $\underline{\mu}$ or $\overline{\mu}$ if
$\mathbb{E}[f(X_1,\cdots ,X_n)]=\underline{\mu}$ or
$\mathbb{E}[f(X_1,\cdots ,X_n)]=\overline{\mu}$, respectively.
\end{definition}
}


%
\section{G-VaR: a new VaR approach under model uncertainty}
\label{sec:GVaR}

First  recall that,  given $\alpha\in
(0,1)$, the  $\text{VaR}_{\alpha}$ at the risk level
$\alpha$ of a financial asset $ X$ is the negative of the
level-$\alpha$ quantile  of
$X$;  that is
\begin{equation}\label{eq:VaR}
  \text{VaR}_{\alpha}(X) =-\inf\{x~:~ F(x)> \alpha\},
\end{equation}
where $F(x)=P(X\leq x)$ is the cumulative distribution function of
$X$.

\subsection{Robust VaR}

Consider a risky  position $X$ under model uncertainty represented by a family of   distributions
$\{F_{\theta}(x)\}_{\theta\in \Theta}$.
The VaR  of $X$ under each $F_{\theta}$ is
$$
\text{VaR}^{\theta}_{\alpha}(X)=-\inf\{x~:~ F_{\theta}(x)> \alpha\}.
$$

Under the  distribution family considered here, it is important to design
a VaR measure that can protect itself against  risk. Note that
 risk here takes a quite general form; that is, no specific
form or prior information is available on this family of  distributions. This
generality is aligned with real market situations in which risk factors are
always difficult, and  perhaps impossible,  to capture precisely. Hence,
any particular form or modeling of the sources of these risks could be misleading.
Accordingly, it becomes natural to consider a {\em
  worst-case} scenario for  VaR.
Formally,  the  worst-case VaR  of $X$ is here defined as
\begin{equation}
  \label{eq:VaR*}
  \text{VaR}^*_{\alpha}(X):=\sup_{\theta\in \Theta}\text{VaR}^{\theta}_{\alpha}(X).
\end{equation}
In the empirical study in Section~\ref{sec:empirical}, it will be shown
that, despite its conservative spirit,  consideration of a  worst-case
scenario, allows the new  VaR  to capture the risks in asset returns very efficiently.

The worst-case  VaR~\eqref{eq:VaR*} has several simple properties.
Let
\begin{equation}
  \label{eq:Fhat}
  \hat{F}(x):=\sup_{\theta\in \Theta}F_{\theta}(x).
\end{equation}
For each $\theta$, we have
$$
\text{VaR}^{\theta}_{\alpha}(X)\leq -\inf\{x~:~\hat{F}(x)> \alpha\}=: \text{VaR}^{\hat{F}}_{\alpha}(X),
$$
such that  $\text{VaR}^*_{\alpha}(X)\le\text{VaR}^{\hat{F}}_{\alpha}(X) $.

\begin{remark}
  It is clear that $\hat{F}$ is a right continuous function. If, in
  addition,  $\{F_{\theta}\}_{\theta\in \Theta}$ is weakly compact,then  it is easy to prove that
  $$
  \lim_{x\to -\infty} \hat{F}(x)=0, \quad  \text{and}\quad
  \lim_{x\to \infty}\hat{F}(x)=1.
  $$
  Thus in this case, $\hat{F}$ is still a probability distribution function.
\end{remark}

\begin{proposition}\label{prop-var}
  { Given a  risky position $X$ and a
  family of  distributions
  $\{F_{\theta}(x)\}_{\theta\in \Theta}$ that is weakly   compact.} Then,
  $$
  \text{VaR}^*_{\alpha}(X):=\sup_{\theta\in\Theta}\text{VaR}^{\theta}_{\alpha}(X)=\text{VaR}^{\hat{F}}_{\alpha}(X).
  $$
\end{proposition}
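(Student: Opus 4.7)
The plan is to establish the equality via two opposite inequalities. The inequality $\text{VaR}^*_\alpha(X) \leq \text{VaR}^{\hat F}_\alpha(X)$ has already been noted immediately before the proposition statement: the pointwise bound $F_\theta \leq \hat F$ yields the set inclusion $\{x : F_\theta(x) > \alpha\} \subseteq \{x : \hat F(x) > \alpha\}$ for each fixed $\theta$, hence $\text{VaR}^\theta_\alpha(X) \leq \text{VaR}^{\hat F}_\alpha(X)$, and taking the supremum over $\theta$ preserves this. So the real work lies in proving the reverse inequality $\text{VaR}^*_\alpha(X) \geq \text{VaR}^{\hat F}_\alpha(X)$.

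I would write $a := \inf\{x : \hat F(x) > \alpha\}$ so that $\text{VaR}^{\hat F}_\alpha(X) = -a$; under the weak compactness assumption, the preceding remark guarantees that $\hat F$ has the correct limits at $\pm\infty$, hence $a \in \mathbb{R}$ and this quantity is finite. Fix an arbitrary $x > a$. By the definition of $a$ as an infimum there exists a sequence $y_n \downarrow a$ with $\hat F(y_n) > \alpha$; for all $n$ large enough $y_n < x$, so monotonicity of $\hat F$ gives $\hat F(x) \geq \hat F(y_n) > \alpha$. Note that this step uses only monotonicity, not the right-continuity of $\hat F$.

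Since $\hat F(x) = \sup_{\theta \in \Theta} F_\theta(x) > \alpha$, by the defining property of a supremum I can pick some $\theta^\star \in \Theta$ with $F_{\theta^\star}(x) > \alpha$. This places $x$ into the set $\{y : F_{\theta^\star}(y) > \alpha\}$, hence $\inf\{y : F_{\theta^\star}(y) > \alpha\} \leq x$, equivalently $\text{VaR}^{\theta^\star}_\alpha(X) \geq -x$. Taking the supremum over $\theta$ yields $\text{VaR}^*_\alpha(X) \geq -x$, and since $x > a$ was arbitrary, letting $x \downarrow a$ gives $\text{VaR}^*_\alpha(X) \geq -a = \text{VaR}^{\hat F}_\alpha(X)$, completing the proof.

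The argument is elementary once the definitions are unpacked; the only mildly delicate point is the passage from the strict inequality $\hat F(x) > \alpha$ to the existence of some $\theta^\star$ with $F_{\theta^\star}(x) > \alpha$, which is where the \emph{strict} form of the VaR definition (``$>\alpha$'' rather than ``$\geq \alpha$'') is used in an essential way. Weak compactness of $\{F_\theta\}_{\theta \in \Theta}$ plays no direct role in this last step; it enters only through the preceding remark, ensuring that $\hat F$ is a genuine distribution function so that $\text{VaR}^{\hat F}_\alpha(X)$ is well-defined and finite.
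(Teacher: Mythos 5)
Your proof is correct, but it takes a genuinely different route from the paper's. You prove the reverse inequality $\text{VaR}^*_{\alpha}(X)\ge \text{VaR}^{\hat F}_{\alpha}(X)$ by a direct approximation of the supremum: for any $x$ strictly above $a=\inf\{y:\hat F(y)>\alpha\}$ you have $\hat F(x)>\alpha$, hence some $\theta^\star$ with $F_{\theta^\star}(x)>\alpha$, hence $\text{VaR}^*_{\alpha}(X)\ge -x$, and you let $x\downarrow a$. The paper instead uses weak compactness to extract a subsequence $F_{\theta_{i_k}}$ converging to some $\tilde F$ in the family with $\tilde F(x_\alpha)=\hat F(x_\alpha)$, i.e.\ it constructs an extremal distribution at which the worst case is (claimed to be) attained. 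Your argument is more elementary and, as you correctly observe, does not use weak compactness at all for the inequality itself --- compactness enters only through the preceding remark to guarantee that $\hat F$ is a genuine distribution function and the quantile is finite; this is a real simplification, since the paper's compactness-based step is the delicate one (in particular, showing that $\inf\{x:\tilde F(x)>\alpha\}$ is not strictly larger than $x_\alpha$ requires care when $\hat F(x_\alpha)=\alpha$ exactly, a point the paper's proof glosses over). What your route gives up is the additional information the paper is after: the existence of a single $\tilde F\in\{F_\theta\}$ realizing the worst-case VaR, which your supremum-chasing argument does not produce. For the equality as stated, your proof is complete and arguably cleaner.
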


\begin{proof}
  It is clear that $\text{VaR}^*_{\alpha}(X)\leq \text{VaR}^{\hat{F}}_{\alpha}(X)$. To prove the reverse inequality, it suffices
  to find an $\tilde{F}\in \{F_{\theta}(x)\}_{\theta\in \Theta}$ such that
  $$
  \text{VaR}^{\hat{F}}_{\alpha}(X)=\text{VaR}^{\tilde{F}}_{\alpha}(X).
  $$
  Because $\hat{F}(x)$ is a right continuous non-decreasing function, we can find an $x_{\alpha}\in (-\infty,\infty)$ such that
  $$
  \hat{F}(x_{\alpha})\geq \alpha > \hat{F}(x),\  \text{for each}\  x< x_{\alpha}.
  $$
  Let $\{F_{\theta_{i}}\}_{i=1}^{\infty}$ be a subsequence of
  $\{F_{\theta}\}_{\theta\in \Theta}$ such that
  $F_{\theta_{i}}(x_{\alpha})\to \hat{F}(x_{\alpha})$. Because
  $\{F_{\theta}(x)\}_{\theta\in \Theta}$ is weakly compact, there exists a subsequence of $\{F_{\theta_{i_k}}\}_{k=1}^{\infty}$  such
  that $\{F_{\theta_{i_k}}\}_{k=1}^{\infty}$
  weakly converges to $\tilde{F}\in \{F_{\theta}(x)\}_{\theta\in \Theta}$. From
  $$
  \hat{F}(x_{\alpha})=\lim_{k\to \infty}F_{\theta_{i_k}}(x_{\alpha})\leq \tilde{F}(x_{\alpha})\leq \hat{F}(x_{\alpha}),
  $$
  it follows that $\tilde{F}(x_{\alpha})=\hat{F}(x_{\alpha})$. Further, for each $x< x_{\alpha}$, $\tilde{F}(x)\leq
  \hat{F}(x)<\hat{F}(x_{\alpha})$, namely,
  $$
  -\text{VaR}^{\hat{F}}_{\alpha}(X)=x_{\alpha}=\inf\{x~:~\tilde{F}(x)>\alpha\}=-\text{VaR}^{\hat{F}}_{\alpha}(X).
  $$
  The proof is complete.
\end{proof}

\subsection{G-VaR}

Based on Proposition \ref{prop-var}, we now introduce the concept of  G-VaR
under the family of distributions  $\{F_{\theta}\}_{\theta\in \Theta}$
for a risky asset $X$ that follows a G-normal distribution
$N(0,[\underline{\sigma}^2,\overline{\sigma}^2])$ depending on two positive
parameters $(\underline{\sigma},\overline{\sigma})$.
More precisely,  G-VaR is defined by  replacing the classical  distribution function
$F(x)$  in \eqref{eq:VaR} with  G-expectation $\mathbb{E}[1_{X\leq x}]$,
that is,
\begin{equation}
  \label{eq:GVaR}
  \text{G-VaR}_{\alpha}(X):=-\inf\{x\in \mathbb{R}:\mathbb{E}[1_{X\leq x}]> \alpha\}.
\end{equation}
{Note  that if  model uncertainty were absent}, the family
$\{F_\theta\}_{\theta\in\theta}$ would reduce to a single
distribution $F$, and  G-VaR would coincide with the traditional VaR
in \eqref{eq:VaR}.
Furthermore,  as $X$ follows a G-normal distribution, we have
$$
\hat{F}(x) = \mathbb{E}[1_{X\leq x}]= u(t,x)|_{t=1},
$$
where $u$ is the solution to  the nonlinear heat equation (\ref{pde-1}),
with Cauchy initial condition
\begin{equation}
  \label{pdecd-2}
\lim_{t\to 0}u(t,x)=1_{(0,\infty)}(x).
\end{equation}
By Proposition \ref{prop2},  function $\hat{F}$ has the following closed-form expressions.
$$
\hat{F}(x)=\int_{-\infty}^x\frac{\sqrt{2}}{(\overline{\sigma}+\underline{\sigma})  \sqrt{\pi }}
\left[ e^{ - \frac{y^2}{2\overline{\sigma}^2}  }   I(y\leq 0)+
  e^ { - \frac{y^2}{2\underline{\sigma}^2}} I(y> 0)  \right]dy,
$$
where $I(A)$ denotes the indicator function of a set $A$.
Moreover, evaluating the integral leads to the following more explicit
form of the function;
\begin{equation}
  \label{eq:Fhat2}
  \hat{F}(x)=
  \frac{2\overline{\sigma}}{\overline{\sigma}+\underline{\sigma}}
  \Phi(\frac{x}{\overline{\sigma}}) \  I(x\le 0)
  +\left\{
    1-    \frac{2\underline{\sigma}}{\overline{\sigma}+\underline{\sigma}}
    \Phi( -\frac{x}{\underline{\sigma}})
  \right\} \ I( x> 0),
\end{equation}
where $\Phi$ denotes the distribution function of the standard
normal.
This  G-normal distribution has a negative mean
$\sqrt{\frac{2}{\pi}}(\underline{\sigma}-\overline{\sigma})$, and a
negative skew. As an example, the  G-normal density function with parameters
$(\underline{\sigma},\overline{\sigma})=(0.5,1)$ is compared to the standard
normal density in Figure~\ref{gnorm1}.
Furthermore, as $\hat F $
is monotonically increasing, the G-VaR in~\eqref{eq:GVaR} is equal to
\begin{equation}
  \label{eq:GVaR2}
  \text{G-VaR}_{\alpha}(X) = - {\hat F}^{-1}(\alpha).
\end{equation}

\tabfig{Figure~\ref{gnorm1}}


{
  \subsection{A simple interpretation of G-VaR and some general comments}
  \label{norm-var}

  Although the G-VaR is  derived through the fairly sophisticated
  theory of SLE and related G-normal distribution,
  its final implementation shown in equations
  \eqref{eq:Fhat2} and \eqref{eq:GVaR2} has a  simple
  interpretation.
  First of all, a G-normal distribution we assumed for the asset $X$
  can be intuitively related to an  infinite family of normal
  distributions
  $\{ N(0,\sigma^2):\,  \sigma^2\in[{\underline\sigma}^2,{\overline\sigma}^2]\}$.
  Therefore the volatility interval
  $[{\underline\sigma}^2,{\overline\sigma}^2]$ \gai{corresponds} to  the model
  uncertainty considered here.
  As G-VaR is positive ($\alpha<0.5$), \eqref{eq:GVaR2} can be rewritten as a
 normal VaR:
  \begin{equation}\label{eq:normalGVaR}
    \text{G-VaR}_{\alpha}(X)=-\tilde{\sigma}\Phi^{-1}(\tilde \alpha),
  \end{equation}
  with
  \[
  \tilde\sigma = \overline{\sigma},\quad
  \tilde{\alpha}=\frac{\overline{\sigma}+\underline{\sigma}}{2\overline{\sigma}}\alpha
  = \frac12 \left( 1+ {\underline{\sigma}}/{ \overline{\sigma}} \right) \alpha.
  \]
  We will call  $\tilde\sigma$ and $ \tilde{\alpha}$ {\em adjusted
    volatility} and {\em adjusted risk level}, respectively.

  Therefore in absence of model uncertainty, one has
  $\tilde\sigma=\overline{\sigma}=\underline{\sigma}$ and
  $\tilde{\alpha}=\alpha$, the adjusted parameters coincide with the
  original ones  and G-VaR \gai{coincides} with  the traditional normal
  VaR. Otherwise, model uncertainty is characterized by the interval
  $[\underline{\sigma},  \overline{\sigma}]$ and the G-VaR becomes more
  conservative with a higher value, as it should be, under the joint
  effect of a larger value for the adjusted volatility $\tilde\sigma$
  and a smaller value of adjusted risk level $ \tilde{\alpha}<\alpha$.
  The degree of \gai{conservatism} is controlled by the two volatility
  parameters
  $\{\underline{\sigma},  \overline{\sigma}\}$: {\em larger model uncertainty
    leads to  more conservative G-VaR}.
  Note that this \gai{property} of G-VaR is intuitive given that VaR can be
  informally viewed as the maximum loss over a given period
  (after excluding   a given fraction of worst outcomes)\footnote{The
    authors thank a referee who recommended this valuable interpretation of G-VaR.}.

  In Section~\ref{img}, we will show how these  parameters are estimated from
  return data where a data-adaptive  window $W_0$ will eventually  determine the
  underlying volatility parameters
  $(\underline{\sigma},\overline{\sigma})$ for a  given risk level parameter $\alpha$.

  On a more methodological issue,
  one may ask whether it is reasonable to focus on the worst-case VaR among a family of distributions as we proposed in this study.
  To address the question, we think that given the inherent model
  uncertainty in  financial markets, a relevant  question is to seek
  for a "best"  aggregation of analyses from different  models.
  Our approach can be justified in two ways.
  Firstly from a theoretical perspective,
  the theory of sublinear expectation provides a
  rigourous derivation of the G-VaR under the worst-case scenario.
  Second,
  the data analyses in
  Section~\ref{sec:empirical} \gai{show} that G-VaR empirically perform well
  in comparison to existing benchmark VaR predictors.

  Indeed one can at some high level view G-VaR as 
  a particular {\em aggregation} method from the infinite normal models
  $\{ N(0,\sigma^2):\,  \sigma^2\in[{\underline\sigma}^2,{\overline\sigma}^2]\}$.
  It is thus interesting to ask ``whether there are other aggregation
  mechanisms that would be less conservative but still achieve a low
  number of violations''\footnote{The authors are grateful to a referee
    who suggested this discussion  with other aggregation methods.}.
  As the question is fairly general, it seems difficult to address it
  \gai{with} full precision. Instead we can here propose a particular  comparison
  with a recent method  for aggregation of \gai{misspecified} models
  proposed in \cite{GM18}.
}
{
  The paper   proposed a generalized aggregation approach for model averaging and
  applied it to asset pricing.
  This  method applies to  a finite number of ``misspecified models'' for
  aggregation.
  \gai{Note} that, we define the G-VaR model via a family of distributions
  $F_{\theta}(\cdot),\ \theta \in
  L^2(\Omega\times[0,T],[\underline{\sigma}^2,\overline{\sigma}^2])$
  which are related to an  infinite family of normal distributions $\{
  N(0,\sigma^2):\,
  \sigma^2\in[{\underline\sigma}^2,{\overline\sigma}^2]\}$. 
  It is thus natural to consider the averaging aggregation method  for a
  finite number, say $m$,  of normal distributions $N(0,\sigma_i^2)$,
  where $\sigma_i\in
  [\underline{\sigma},\overline{\sigma}],\ i=1,2,\cdots,m$.
  Let $F_i(\cdot)$ be  the cumulative distribution function (c.d.f.)
  of \gai{the}  normal
  distribution $N(0,\sigma_i^2)$ ($1\le  i\le m$).
  Implementing  the averaging aggregation mechanism in \cite{GM18} to
  VaR amounts to consider a c.d.f. of the form
  \begin{equation}\label{amm}
    \tilde{F}_{\omega}(x)=\bigg[\sum_{i=1}^m \omega_i(F_i(x))^{p}\bigg]^{1{/}p},\quad x\in \mathbb{R},
  \end{equation}
  where the weights $\omega=(\omega_1,\omega_2,\cdots,\omega_m)$ satisfy
  $\sum_{i=1}^m\omega_i=1,\ \omega_i\geq 0$ and $p>0$ is some ``shape'' parameter.
  Let  $F_0(\cdot)$ and $F_{m+1}(\cdot)$ be the c.d.f. of the two
  boundary normal distributions $N(0,\underline{\sigma}^2)$ and  $N(0,\overline{\sigma}^2)$, respectively.
  If $x<0$, we have
  $$
  F_0(x)\leq F_i(x)\leq F_{m+1}(x),\quad i=1,2,\cdots,m,
  $$
  and thus
  $$
  F_0(x)\leq \tilde{F}_{\omega}(x)\leq F_{m+1}(x).
  $$
  For \gai{any given risk} $\alpha<0.5$ and risky position $X$, it follows that,
  $$
  -\tilde{F}_{\omega}^{-1}(\alpha)\leq -F_{m+1}^{-1}(\alpha)\leq \text{G-VaR}_{\alpha}(X),
  $$
  where $\tilde{F}_{\omega}^{-1}(\cdot)$ and $F_{m+1}^{-1}(\cdot)$ are
  the inverse functions of $\tilde{F}_{\omega}(\cdot)$ and
  $F_{m+1}(\cdot)$, respectively. These results show that this 
  particular averaging aggregation method leads to a less conservative  VaR, but
  with a higher \gai{average} number of violations than the G-VaR.
  Note that the comparison here is quite particular and the
  averaging aggregation method is limited
  to a finite number of normal distributions. 
  Therefore, we still do not know \gai{whether} there exists any 
  other aggregation method for the VaR that is less conservative than
  G-VaR but with
  a lower number of violations.
}

\section{Implementation of G-VaR }\label{img}
\label{sec:implem}

In  implementing  G-VaR~\eqref{eq:GVaR},
the main task is to
estimate the parameters of the underlying G-normal distributions.
Let  $\{X_{t}\}_{0\leq t\leq T}$ be a return time-series from a risky
asset. At each time $t$, the goal is to forecast the VaR of  $X_{t+1}$
at a given level $\alpha$ using  the history of  available values $\{X_s\}_{0\le s\le t}$.
In the following, let window $W$ be the length of the trading days, which is used to estimate the parameters $(\underline{\sigma}^2,\overline{\sigma}^2)$.
To forecast the  VaR of  $X_{t+1}$ in the G-VaR model,
the data $\{X_{s+1}\}_{t-W}^{{t}}$, i.e., the history of
length $W$ before time $t$,  are assumed to be {\em independent and
identically distributed} (i.i.d.) with a  $G$-normal distribution,
$N(0,[\underline{\sigma}_{t}^{2},\overline{\sigma}_{t}^{2}])$. It is
important to remind the reader that the concepts of independence and
distribution equality used here are not the classical ones but those under
the theory of SLE
$\mathbb{E}[\cdot]:=\sup_{\theta\in \Theta}E_{\theta}[\cdot]$.
The appendix
provides a detailed introduction to these new concepts. Briefly,
under SLE,  two random variables $Y_1$ and $Y_2$ are
identically distributed if, for $\phi\in C_{l.Lip}(\mathbb{R})$,
$$
\mathbb{E}[\phi(Y_1)] = \mathbb{E}[\phi(Y_2)].
$$
A random variable $Y_2$ is said to be independent of $Y_1$ if, for
each $\psi\in C_{l.Lip}(\mathbb{R}\times\mathbb{R})$, we have
$$
\mathbb{E}[\psi(Y_1,Y_2)]=\mathbb{E}[\mathbb{E}[\psi(y_1,Y_2)]_{y_1=Y_1}].
$$
Note the ordering of  this independence: the fact that  $Y_2$ is
independent of $Y_1$ does not imply that  $Y_1$ is  independent of
$Y_2$. { In general, we \gai{say that } $Y_n$ is independent of $(Y_1,\cdots,Y_{n-1})$, if
$$
\mathbb{E}[\phi(Y_1,\cdots,Y_{n-1},Y_n)]=\mathbb{E}[\mathbb{E}[\phi(y_1,\cdots,y_{n-1},Y_n)]_{y_1=Y_1,\cdots,y_{n-1}=Y_{n-1}}],
$$
for \gai{any} $\phi\in C_{l.Lip}(\mathbb{R}^n)$.}

Theorem 24  in \cite{JP16} shows that if $X_1,\cdots,X_n$
form an  i.i.d. sample of size $n$ from a maximal distribution  with
parameters $(\underline{\mu},\overline{\mu})$, where the i.i.d. is under SLE, then
$$
\underline{\mu}\leq  \min\{X_1,\cdots ,X_n\} \leq   \max\{X_1,\cdots ,X_n\} \leq \overline{\mu}.
$$
Moreover,
$$
\hat{\overline{\mu}}=  \max\{X_1,\cdots ,X_n\}
$$
is the largest unbiased estimator for the upper mean $\overline{\mu}$, and
$$
\hat{\underline{\mu}}=  \min\{X_1,\cdots ,X_n\}
$$
is the smallest unbiased estimator for the lower mean
$\underline{\mu}$.  {This estimation method is called {``Max-Mean calculation''} in \cite{Peng2017}}.

Precisely, we assume that $X_{t+1}$ follows $N(0,[\underline{\sigma}_{t}^{2},\overline
{\sigma}_{t}^{2}]),\ t>0$. For each fixed $\bar{t}$, the
data $\{X_{\bar{t}-s}\}_{0\leq s\leq W-1}$ are used to estimate the
two parameters
$\underline{\sigma}_{\bar{t}}^{2}$ and
$\overline{\sigma}_{\bar{t}}^{2}$ for the  forecast of the  VaR of
$X_{\bar{t}+1}$.  Let $W_0\leq W$ be the  window width.  The following
moving window approach is then employed.
For each time $s$, let
\[
  \hat\sigma_{{s},W_0}^{2}
  =\hat{\sigma}_{s,W_0}^{2}(X_{{s}-W_0+1},\cdots,X_{{s}}) =\frac1{W_0}
  \sum_{j=1}^{W_0} X^2_{s-j+1}
\]
be the sample variance from the sample  $(X_{s-W_0+1},\ldots, X_s)$,
that is, the history of length $W_0$ before time $s$.
Let $k=\lfloor \frac{W}{W_0}\rfloor$ be the largest  integer satisfying $kW_0\leq W$.
Define
\begin{align*}
  & \hat{\overline{\sigma}}_{\bar{t},k}^{2} =\max \{\hat{\sigma}_{\bar{t}-s,W_0}^{2}:~  s=0,W_0,2W_0,\cdots,(k-1)W_0\},\\
  & \hat{\underline{\sigma}}_{\bar{t},k}^{2} =\min \{\hat{\sigma}_{\bar{t}-s,W_0}^{2}:~ s=0,W_0,2W_0,\cdots,(k-1)W_0\},\\
  &\hat{\overline{\sigma}}_{\bar{t},W_0}^{2}  =\max \{\hat{\sigma}_{\bar{t}-s,W_0}^{2}:~ 0\le s \le W-W_0\},\\
  &\hat{\underline{\sigma}}_{\bar{t},W_0}^{2} =\min \{\hat{\sigma}_{\bar{t}-s,W_0}^{2}:~0\le s \le W-W_0\}.\\
\end{align*}
{ \cite{Peng2010} shows that  the quadratic
variation process of a G-Brownian
motion follows a maximal distribution (see page 60 of the reference).  Thus, the quadratic variation
$\left\langle X \right\rangle_{t+1}$ follows a  maximal distribution
with \gai{interval} $[\underline{\sigma}^2_t,\overline{\sigma}^2_t]$.} By  Theorem 24  of \cite{JP16}, $ \hat{\overline{\sigma}}_{\bar{t},k}^{2} $  is the largest unbiased estimator for the upper mean $\overline{\sigma}_{\bar{t}}^{2}$ , and $ \hat{\underline{\sigma}}_{\bar{t},k}^{2} $  is the smallest unbiased estimator for the lower mean $\underline{\sigma}_{\bar{t}}^{2}$. { In addition, $ \hat{\overline{\sigma}}_{\bar{t},k}^{2}$ and $ \hat{\underline{\sigma}}_{\bar{t},k}^{2} $ converge to $\overline{\sigma}_{\bar{t}}^{2}$ and $\underline{\sigma}_{\bar{t}}^{2}$ as $W\to \infty$, respectively . Also note that
$$
\underline{\sigma}_{\bar{t}}^{2}\leq \hat{\underline{\sigma}}_{\bar{t},W_0}^{2} \leq \hat{\underline{\sigma}}_{\bar{t},k}^{2}
\leq
\hat{\overline{\sigma}}_{\bar{t},k}^{2} \leq \hat{\overline{\sigma}}_{\bar{t},W_0}^{2} \leq \overline{\sigma}_{\bar{t}}^{2},
$$
thus $\hat{\overline{\sigma}}_{\bar{t},W_0}^{2} $ and $ \hat{\underline{\sigma}}_{\bar{t},W_0}^{2} $ converge to $\overline{\sigma}_{\bar{t}}^{2}$ and $\underline{\sigma}_{\bar{t}}^{2}$ as $W\to \infty$.
These  shows that we can use  $\hat{\overline{\sigma}}_{\bar{t},W_0}^{2} $  and $ \hat{\underline{\sigma}}_{\bar{t},W_0}^{2} $ to estimate $\overline{\sigma}_{\bar{t}}^{2}$,
and $\underline{\sigma}_{\bar{t}}^{2}$, under
the given length of $W$ historical data and $W_0$}.\footnote{
  \cite{FPSS17}  provide a convergence  rate for these estimators  under sublinear expectation.}

In summary, at a given time $\bar t+1$, where a VaR forecast is
required,
by acknowledging model uncertainty in the historical data of length
$W$,  $\{X_t\}_{\bar t -W <t\le \bar t} $, $X_{\bar t+1}$ follows the
G-normal distribution $ N(0,[\underline{\sigma}^2_{\bar t},
\overline{\sigma}^2_{\bar t}  ])$. Moreover, the two parameters
$\underline{\sigma}^2_{\bar t}$ and $\overline{\sigma}^2_{\bar t}  $
can be well approximated by the estimators $\hat{\underline{\sigma}}_{\bar{t},W_0}^{2}$ and
$\hat{\overline{\sigma}}_{\bar{t},W_0}^{2}$, respectively.
Consequently, by \eqref{eq:GVaR2},   the final G-VaR estimate for the VaR of $X_{\bar t+1}$
at level $\alpha$ is
\begin{equation}
  \label{eq:GVaR-t1}
  \text{G-VaR}^{W_0}_{\alpha,\bar{t}}(X_{\bar{t}+1})=
  -  \left\{ {\hat   F}_{\bar t}^{W_0}\right\}^{-1}(\alpha),
\end{equation}
where
\begin{equation}\label{eq:GVaR-t2}
  {\hat    F}_{\bar t}^{W_0}(x)
  =
  \frac{2\hat{\overline{\sigma}}_{\bar{t},W_0}}{\hat{\overline{\sigma}}_{\bar{t},W_0}+\hat{\underline{\sigma}}_{\bar{t},W_0}}\Phi(\frac{x}{\hat{\overline{\sigma}}_{\bar{t},W_0}})
  ~I(x\le 0)
  +
  \left\{
    1-\frac{2\hat{\underline{\sigma}}_{\bar{t},W_0}}{\hat{\overline{\sigma}}_{\bar{t},W_0}+\hat{\underline{\sigma}}_{\bar{t},W_0}}\Phi(-\frac{x}{\hat{\underline{\sigma}}_{\bar{t},W_0}})
  \right\}
  ~ I(x> 0).
\end{equation}

Note that, for a given $W$ and $\alpha$, we  obtain different
$\hat{\overline{\sigma}}_{\bar{t},W_0}^{2} $ and
$\hat{\underline{\sigma}}_{\bar{t},W_0}^{2} $
\gai{depending on $W_0$}. Thus, $W_0$ can be interpreted as a measure of distribution
uncertainty.
\gai{This parameter $W_0$ plays a fundamental role in our analysis
  and we formalize its function in the following condition.
}

\begin{condition}\label{ass-2}
  {
    The series of returns   $\{X_t\}_{0\leq t\leq T}$ has the property
    that  for given  window $W$ and  risk level $\alpha$,
    there exists a \gai{window size}  $1\le W_0\le W$ such that
    \[
    \lim_{n\to \infty} \frac1{n-W}
        {\sum_{\bar{t}=W}^nI(X_{\bar{t}+1}<-\text{G-VaR}^{W_0}_{\alpha,\bar{t}}(X_{\bar{t}+1}))}
        =\alpha. 
        \] 
  }
\end{condition}

{
  \gai{When Condition~\ref{ass-2} is satisfied, we say that the return
  series has an {\em adaptive window} $W_0$ for a given pair
  $(\alpha,W)$. The condition}
  thus guarantees  coherence between the empirical
  percentages of violations and  G-VaR measure
  $\text{G-VaR}^{W_0}_{\alpha,\bar{t}}$ in \eqref{eq:GVaR-t1} for the
  entire dataset $\{X_t\}_{1\le t\le T}$ with historical window size
  $W$ and estimation window size $W_0$. Based on a large amount of
  data analysis,
  Condition~\ref{ass-2} appears fairly satisfied in many practical situations.}

 { As noted earlier, this paper  assumes that the observations are described  by infinitely many models (or distributions) rather than a single model.} Using
 the SLE theory and  adopting a worst-case
 scenario, the VaR at a given risk level $\alpha$ and time $t$
 can be evaluated through a  $G$-normal distribution
  $N(0,[\underline{\sigma}_{}^{2},\overline{\sigma}_{}^{2}])$.
  The parameter $W_0$ can be interpreted as the time duration  for which this
  worst-case scenario best fits the returns data. Moreover, as will
  be confirmed by the experiments in Section~\ref{ssec:W0}, this parameter
  depends on both the risk level $\alpha$ and  historical window
  size $W$.  { For example, at higher confidence levels (lower
    levels of $\alpha$), a higher degree of conservatism is reached by
    a lower value of $W_0$, which { generates} a greater volatility interval $[\underline{\sigma}^{2},\overline{\sigma}^{2}]$.}\footnote{%
    It has been reported in the literature that the parameters of a VaR model
    can depend on  the risk level $\alpha$. For example,
    \cite{Kuester06} observed that the normality assumption on the
    data ``might have some merit for larger values of $\alpha$,'' but
     is still not adequate for a 5\% risk level (see the second paragraph
    on page 76 \gai{of the reference}).
  } %
  Therefore, in real data analysis, such as that in
  Section~\ref{sec:empirical},
  for a given pair $(W,\alpha)$, we first check whether an adaptive
  window size $W_0$ exists, see Condition~\ref{ass-2}.  G-VaR
  forecasts are  possible only after \gai{finding such} $W_0$.
  Anticipating  the empirical study in
  Section~\ref{sec:empirical},  we will show that for the
  NASDAQ Composite Index and  S\&P500 Index,
  \gai{an adaptive window $W_0$ can indeed be found}
  for a wide range of risk levels
  $\alpha$. { However, for the CSI300 Index, which  we  also analyzed,
  no adaptive window
  size $W_0$ can be found for a reasonable historical window size $W$ and
  risk level $\alpha>5\%$. For a given risk level $\alpha$, as
  discussed in Section~\ref{norm-var},   $\text{G-VaR}_{\alpha}(X)$
  is equivalent to {a normal VaR}
  with \gai{an adjusted} variance parameter  \gai{$\tilde{\sigma}^2$}
  and downward adjusted risk  level
  $\displaystyle 0.5\alpha<\tilde{\alpha}\leq \alpha$.}
  These results indicate that the VaR  under normal distribution
  $N(0,\overline{\sigma}^2)$ within the risk level parameters interval
  $[0.5\alpha,\alpha]$ cannot cover the risk of the CSI300 Index when
  $\alpha>5\%$.
  This negative result
  can be interpreted  as that 
  \gai{
    the model uncertainty in CSI300 Index might be beyond the range
    that can be captured by the proposed G-VaR}.

\section{Empirical results  of G-VaR}
\label{sec:empirical}

In this section, the G-VaR  forecasts are evaluated
for the NASDAQ Composite Index and S\&P 500
Index.\footnote{Data  are downloaded from
  https://finance.yahoo.com/lookup. Recall that the two indexes are  market
  value-weighted portfolios comprising more than  5000 and 500
  selected
  stocks, respectively.}
Both indexes  comprise daily closing levels. The main steps are as follows.

\medskip\noindent\textbf{Step 1 - data preparation:} The
NASDAQ Composite Index is denoted by $\{Z_{1,t}\}$, running from
February 8,1971 to June 22, 2001, with a total of $N=7675$ observations of
percentages. The S\&P 500 Index is denoted by $\{Z_{2,t}\}$,
running from January 3, 2000 to February 7, 2018, with  a total of
$N=4550$ observations.
Their  daily log-returns are
$$
r_{i,t}=100(\ln Z_{i,t}-\ln Z_{i,t-1}), \quad i=1,2.
$$

\cite{Kuester06} found that when using a historical window  $W=1000$,
the best VaR predictions for the NASDAQ index are obtained by
AR-GARCH filtered modeling such as the recommended  AR-GARCH-Skewed-t or
AR-GARCH-Skewed-t-EVT models. The G-VaR predictor proposed in this paper is
compared with these two benchmarks, as well as with a more traditional
AR-GARCH-Normal predictor using standard normals for the filtered residuals.

\medskip\noindent\textbf{Step 2 - AR(1) filtering:} To carry out the   G-VaR prediction, we first filter the data with the following
AR(1) process; that is, the series   $r_{i,t},\ i=1,2$ satisfy
the model equations
\begin{equation}
  \begin{aligned}
    &r_{i,t}=a_ir_{i,t-1}+\epsilon_{i}
  \end{aligned}
\end{equation}
where  the $\epsilon_{i}$ follow G-normal distributions
$N(0,[\underline{\sigma}^2_{i},\overline{\sigma}^2_{i}])$, $i=1,2$,
respectively.

\medskip\noindent
\textbf{Step 3 - selection of historical and estimation window lengths
  $W$ and $W_0$:} The implementation of the G-VaR in
Section~\ref{sec:implem} requires the values of
the two window lengths  $W$ and $W_0$ for a given risk level $\alpha$.
Note that, in our G-VaR model, $W_0$ is dependent on $\alpha$
and $W$. Similarly to  \cite{Kuester06}, we consider three historical
windows, $W$=1000, 500, and 250.   The corresponding values of $W_0$
are selected empirically to ensure that Condition~\ref{ass-2} holds.

\subsection{  NASDAQ Composite  Index}
\newcommand{\tabincell}[2]{\begin{tabular}{@{}#1@{}}#2\end{tabular}}

We first compare the G-VaR model with the AR(1)-GARCH(1,1)-Normal,
AR(1)-GARCH(1,1)-Skewed-t, and AR(1)-GARCH(1,1)-Skewed-t-EVT VaR models.
For given windows $W$=1000, 500, and 250, we show how to determine a
window $W_0$ that satisfies Condition~\ref{ass-2}.  For example, for a given $W=1000,\alpha=0.01$, and time point $\bar{t}$, we calculate the G-VaR of $r_{1,\bar{t}}$ ( NASDAQ return) with different $W_0\leq W$. Then, we choose the $W_0$ that satisfies
$$
\lim_{n\to \infty}  \frac1{n-W}
\sum_{\bar{t}=W}^nI(r_{1,\bar{t}+1}<-\text{G-VaR}^{W_0}_{\alpha,\bar{t}}(r_{1,\bar{t}+1}))  =0.01.
$$
Note that the above percentage  is the violation rate of $r_{1,t}$
under the G-VaR model from time  $W+1$ to $n+1$, hereafter denoted as  \%Viol$(n)$.

Figure \ref{nvio} plots  the evolution of
\%Viol$(n)$ as time $n$ varies.  Here, $\alpha=0.01$ is used, but the findings are similar for
other values of $\alpha$.
For all window sizes $W=1000,500$, and $250$,  we find that when $3000\leq
n-W$, the violation rate becomes close to the target
$\alpha=0.01$. All three cases  use a well-calibrated value of
$W_0=350,~120,~75$.  In practice, as done here for
the NASDAQ Index, $W_0$ has to be calibrated, and, once fixed, it is
kept to forecast future G-VaR values. { Condition~\ref{ass-2}} technically
guarantees the existence of such a converging window size $W_0$.
Table~\ref{table:ns} presents the summary statistics for the \%Viol rates
over the converging period $3000<n-W$.

\tabfig{Table~\ref{table:ns}}

\tabfig{Figure~\ref{nvio}}

To assess the predictive performance of the models under
consideration, we follow the test of unconditional coverage, or the
binomial test  \citep{Kuester06}.
This is in fact a likelihood ratio test for a Bernoulli trial in which
the null trial success probability is equal to $\alpha$. More precisely,
let  $\hat{\alpha} = m_1/(m_0+m_1)$ be the sample violation rate
\%Viol, where  $m_1$ is the sample  number of violations, and   the total number of
observations is $m_0+m_1=T-W$.
Using the well-known asymptotic $\chi^2(1)$ distribution, the
$p$-value of the test is
$$
\text{LR}_{uc}=P\left(\chi^2(1)>2 m_1\frac{\hat{\alpha}}{\alpha}+2m_0\frac{1-\hat{\alpha}}{1-\alpha}\right).
$$

Table \ref{table:ns1000} gives the empirical values of the statistics  $\%
\text{Viol}, \text{LR}_{uc},  100\overline{\text{VaR}}$ for the  G-VaR
with $n=T$, $W=1000$, and $\alpha=0.01,0.025,0.05$. Here,
$100\overline{\text{VaR}}$ means $100$ times the average VaR of the
related model.   The  corresponding values of  $\% \text{Viol},
\text{LR}_{uc},  100\overline{\text{VaR}}$  for the
AR(1)-GARCH(1,1)-Normal, AR(1)-GARCH(1,1)-Skewed-t, and
AR(1)-GARCH(1,1)-Skewed-t-EVT models are directly imported  from Table 3 in \cite{Kuester06}. The results in  Table \ref{table:ns1000} show
that, for a given $W=1000, \alpha=0.01,0.025,0.05$, once we find the
corresponding $W_0=350,650,900$, the \%Viol of G-VaR is better than that of the
AR(1)-GARCH(1,1) model with  Normal, Skewed-t,  Skewed-t-EVT
innovations. See the plot of the $p$-value $\text{LR}_{uc}$ at the
bottom of the table. In addition, the values of
$100\overline{\text{VaR}}$ in the four models are very close to one another.

\tabfig{Table~\ref{table:ns1000}}


\cite{Kuester06} concluded  that the  AR-GARCH-Skewed-t and
AR-GARCH-Skewed-t-EVT VaR models achieve better performance with { larger}
windows, e.g.,
$W=1000$, than with smaller windows, e.g., $W=500,250$.
For G-VaR, however, as suggested by Figure~\ref{nvio}, the \%Viol
statistics are much more stable for  $W=500,250$, which
 suggests  better performance  with smaller
windows. To verify that suggestion, Table~\ref{table:ns500250}
gives the empirical statistics
of  $\% \text{Viol}, \text{LR}_{uc},  100\overline{\text{VaR}}$ from
the G-VaR model for windows $W=500,250$ and  risk levels
$\alpha$=0.003, 0.005, 0.01, 0.025, 0.05,  thereby confirming that G-VaR indeed achieves
excellent performance with  smaller windows. For the
difficult case with the lowest risk level, $\alpha=0.003$, the empirical
$p$-value even achieves top values of 0.96 and 1.00!

\tabfig{Table~\ref{table:ns500250}}

\subsection{ S\&P500 Index }
The  S\&P500 Index data are  analyzed using  windows
$W=1000,500,250$.
The window size $W_0$ in Condition~\ref{ass-2} is determined in exactly
 the same way as  for the NASDAQ Composite Index
data. Figure~\ref{svio} replicates Figure~\ref{nvio}, but for the
S\&P500 Index. Summary statistics of \%Viol from  $3000 <n-W$ are
given in Table~\ref{table:sp} (replicating Table~\ref{table:ns}, but for the
S\&P500 Index).

\tabfig{Table~\ref{table:sp}}

\tabfig{Figure~\ref{svio}}

Table \ref{table:sp1000} gives the empirical statistics of  $\% \text{Viol}, \text{LR}_{uc},  100\overline{\text{VaR}}$ for the AR(1)-GARCH(1,1)-Normal, AR(1)-GARCH(1,1)-Skewed-t, AR(1)-GARCH(1,1)-Skewed-t-EVT, and G-VaR models for the S\&P500 data with $n=T$, $W=1000$, and $\alpha=0.003,0.005,0.01,0.025,0.05$. These results  show
that, with a well-calibrated value for $W_0$,
G-VaR clearly outperforms  the three benchmark VaR predictors using
AR(1)-GARCH(1,1) filter and
Normal, Skewed-t, and Skewed-t-EVT innovations. See the
 $p$-value plot at the bottom of the table.

The experiments were then repeated for smaller windows, i.e., $W=500$ and 250. The
corresponding results for $W=500$ are given in
Table~\ref{table:sp500}, with calibrated values
$W_0=70,110,120,250,480$ for the various risk levels.
With the exception of  one case, namely, $\alpha=0.05$ with the GARCH-ST-EVT model,
G-VaR again  outperforms all  competitors; see the plot at the
bottom of the table.
The results for $W=250$ are reported in
Table~\ref{table:sp250}.
Here, the G-VaR outperforms all  competitors uniformly and
significantly.

The last plots in  Figures \ref{sw1} and \ref{sw5}  show the time
evolution of the three  one-step-ahead forecasts given
by  the G-VaR,  AR(1)-GARCH(1,1)-Skewed-t and AR(1)-
GARCH(1,1)-Skewed-t-EVT models.
Each plot comprises three  historical  windows,
$W=1000,500,250$.  The risk level is $\alpha=0.01$ in
Figure~\ref{sw1},  and $\alpha=0.05$ in Figure~\ref{sw5}.
All  three VaR predictors  have the capacity to follow the
rise-drop  patterns of the original return-series.

%
\tabfig{Table~\ref{table:sp1000}}

\tabfig{Table~\ref{table:sp500}}

\tabfig{Table~\ref{table:sp250}}

\tabfig{Figure~\ref{sw1}}

\tabfig{Figure~\ref{sw5}}

\subsection{Adaptive window size $W_0$}
\label{ssec:W0}

The implementation of   G-VaR forecasts requires the existence of
an adaptive window  $W_0$ that enables  estimation of the variance
parameters  $(\underline{\sigma}_t,\overline{\sigma}_t)$ of the
G-Normal distribution
$N(0,[\underline{\sigma}^2_{t},\overline{\sigma}^2_{t}])$ used at some
given time point $t$.  Figure~\ref{w0} displays the values of $W_0$
for the different values of $\alpha$  given in
Tables~\ref{table:sp1000}-\ref{table:sp250}.

{
  $W_0$  increases  with  risk level
  $\alpha$.  As explained earlier (see comments after
  Condition~\ref{ass-2}), a smaller $\alpha$  implies greater
  volatility, and   a smaller window $W_0$ is thus needed under the
  worst-case
  scenario adopted in this paper.}

\tabfig{Figure~\ref{w0}}

{
The information carried by these experimental
  values of $W_0$ can be pushed further. In  Figure~\ref{np500}, we compare the values
  found for both the NASDAQ Composite Index and S\&P500 Index
  under different risk levels $\alpha\in\{0.003,0.005,0.01,0.025,0.05\}$,
  with the  historical window size   fixed at $W=500$.
  Because under our worst-case scenario, smaller  window sizes  $W_0$
  correspond to higher volatility (and thus higher risk in the index),
  we can assume that the S\&P500 Index return is riskier than the
  NASDAQ Composite Index return at risk level $\alpha=0.025$. Their degree of risk
  is comparable  at risk level $\alpha=0.01$,
  and
  the S\&P500 Index is probably less risky
  at risk levels $\alpha=0.003,0.005$, and 0.05.}

\tabfig{Figure~\ref{np500}}

{  Finally, note that we used an initial segment of 3000 data
  points to determine the parameter $W_0$ in both cases of the  NASDAQ
  Composite Index and S\&P 500 Index. \gai{Once} $W_0$ is found, it is kept
  fixed and  used
  for  different moving windows $W=1000,500,250$. Therefore,
  the G-VaR forecasts can be considered as in-sample  forecasts for
  initial  $3000+W$ data points, while they are out-sample ones for
  the remaining  $n-(3000+W)$ data points.}

\section{Discussion}
\label{sec:concl}

This paper introduces  a new VaR predictor,  G-VaR,  for financial
return series.
Our methodology is based on the model-uncertainty principle that the volatility of
returns cannot be adequately  characterized by a single statistical distribution
or model. Rather,  an infinite family of distributions is
necessary for full characterization.
Considering the worst-case  volatility scenario
among these numerous potential distributions, and  using
the recent theory of  SLE,
we formally identity  G-VaR  through a new mathematical object
called $G$-normal distribution.
Extensive empirical analysis using  the NASDAQ Composite Index and
S\&P500 Index shows the G-VaR predictor to outperform many of the existing benchmark
predictors of  VaR. Its superiority  is particularly significant
for low risk levels, such as $\alpha=1\%$ or 0.5\%.

It is difficult to provide a completely clear  explanation for  the
surprising success of  G-VaR. Most likely, the concept of model
uncertainty has particular strength when considering the volatility of
returns. Such volatility is time-varying, and is reputed to be complex in nature, and thus
 the worst-case scenario approach taken by  G-VaR over all potential
volatility distributions  proves to be an excellent fit to the analysed
data. Judged by the empirical results presented herein, this model-uncertainty
approach appears more powerful than many of the existing approaches with
model certainty, wherein a unique statistical distribution is assumed for
the volatility process.

However, a number of unanswered question remain to be investigated
in future. In particular, the implementation of  G-VaR  depends on
an adaptive  window $W_0$. Although it has been shown that this ``tuning parameter'' can be
efficiently determined empirically for the two datasets analyzed in
this paper, it would be
worth investigating more its intrinsic or physical meaning.
It would  also be valuable to  analyze the performance of the G-VaR predictor on
other financial series to determine the extent to which the
worst-case scenario approach under model uncertainty remains successful.
More generally, it would be useful to explore
other
financial or even non-financial datasets in which
model uncertainty is unavoidable. The SLE  theory
could also provide new data analytic tools in the vein of the G-VaR
approach developed in this paper for the volatility of returns.

\section*{Acknowledgement and note on the genesis of G-VaR}

{
  The authors are grateful to numerous detailed  suggestions and
  comments from two reviewers
  and the editor that have led to significant improvements of the paper. 
  Specifically, the interpretation of G-VaR in terms of adjusted
  normal VaR  in (\ref{eq:normalGVaR}) is proposed by a reviewer.
  The interesting comparison with an averaging aggregation of
  misspecified VaRs in Section~\ref{norm-var} is proposed by another
  reviewer.
}  

This work on G-VaR has benefited from many discussions the authors had at
various meetings and workshops organized mainly at Zhongtai Security
Institute of Finance, Shandon University. 
It was in a team led by S. Peng in this institute that  G-VaR was
conceived in spring 2015. 
A first reporting on G-VaR  with
some data analysis examples on CSI300 Index
is proposed by members of this team
at the Industrial Problem Solving Workshop in Finance  in Shanghai
Jiao Tong University in April 2015.
By May 2015,   two of the authors (S. Peng and S. Yang) drafted a
working paper (unpublished) on
G-VaR where the concept was formally defined using the sublinear
expectation theory.
Later in October 2015    Zhongtai Security Institute of Finance
produced a report for CFFX where G-VaR is used to analyze
some  data sets  from Chinese markets.
Another internal report of Zhongtai Security
Institute of Finance by Gong, Yang, Hu and Zhang in November 2015
proposed some experiments of G-VaR for Standard \& Poor 500 Index
data.
It is however important to note that  all these preliminary works are
mostly empirical without any theory of G-VaR and using some
rudimentary implementation of G-VaR.
It is in this paper that  the G-VaR concept is rigorously constructed and
justified. In addition,  all the
empirical studies in this paper are new and different of those
reported in the aforementioned internal reports. { In addition,
  we would like to thank L. S. Jiang and X. Y. Yue for helping with
  the solution of  the fully nonlinear PDE (\ref{pde-1}).}

\bibliography{gexp,var}

\appendix

\section{Relevant results from   sublinear expectations}
\label{sec:sublinearE}

In this appendix, we introduce the relevant concepts and properties of
the general theory of sublinear expectation used in the paper. Let
$\Omega$ be an arbitrarily given set and $\mathcal{H}$ be a linear space of real functions, called random variables,
defined on $\Omega$ such that, if $\xi\in \mathcal{H}$, then
$\left|\xi \right|\in \mathcal{H}$. We also assume that $1\in
\mathcal{H}$. The space $\mathcal{H}$ is called a vector lattice on
$\Omega$. We make the following assumption: if $\xi_1,\cdots,\xi_n\in
\mathcal{H}$, or equivalently,
$\xi=(\xi_1,\cdots,\xi_n)\in \mathcal{H}^n$,  then $\phi(\xi)\in \mathcal{H}$ for each function $\phi$ in
$C_{l.Lip}(\mathbb{R}^n)$. Here, $\phi$ corresponds to some  characteristic of $\xi$, and $C_{l.Lip}(\mathbb{R}^n)$ is the space of all functions $\phi$ defined on $\mathbb{R}^n$ satisfying
$$
\left|\phi(x)-\phi(y)  \right|\leq C(1+\left|x \right|^m+\left|y
\right|^m)\left|x-y \right|, \quad x,y\in \mathbb{R}^n,
$$
for  some $C>0$ and  $m\in \mathbb{N}$ depending on $\phi$. Similarly, with the probability space, we introduce a sublinear expectation $\mathbb{E}$ on $\mathcal{H}$ that was first proposed in \cite{Peng2006}.

\begin{definition}
A function $\mathbb{E}:\mathcal{H}\to \mathbb{R}$ is called a sublinear expectation
on $(\Omega,\mathcal{H})$ if it satisfies

1). Monotonicity: if $X(\omega)\geq Y(\omega)$ for each $\omega\in
\Omega$, then $\mathbb{E}[X]\geq \mathbb{E}[Y]$;

2). $\mathbb{E}[X+c]=\mathbb{E}[X]+c$ for any $c\in\mathbb{R}$;

3). $\mathbb{E}[X+Y]\leq \mathbb{E}[X]+\mathbb{E}[Y]$; and

4). $\mathbb{E}[\lambda X]=\lambda \mathbb{E}[X]$ for any $\lambda\ge 0$.
\end{definition}

{ A closely associated concept is
  {\em coherent risk measures} introduced in \cite{A99}.
  From a \gai{mathematical} point of view,
  this concept essentially coincides with that of sublinear
  expectation.
  What makes the difference is perhaps the
  respective contexts where the two concepts  are introduced and used.
  Coherent risk measures are introduced within mathematical finance
  (or   financial mathematics) and  dedicated to study problems and
  subjects in this area, for example market risks and non-market risks.
  On the other hand,
  sublinear expectation is more abstract  and introduced within general
  probability theory.
  For example,  objects and concepts
  like
  Brownian motion, conditional expectation and  martingale
  have been  developed  under sublinear expectation.
  A related discussion is
  \cite{DHP11} where
  a  representation theorem of a coherent risk measure is proposed.
}

Let $X_1$ and $X_2$ be two $n$-dimensional random vectors defined on nonlinear
expectation spaces $(\Omega_1,\mathcal{H}_1,\mathbb{E}_1)$ and $(\Omega_2,\mathcal{H}_2,\mathbb{E}_2)$, respectively. They are called
identically distributed, denoted by $X_1:\overset{d}{=}X_2$, if
$$
\mathbb{E}_1[\phi(X_1)] = \mathbb{E}_2[\phi(X_2)],   \quad \forall \phi\in C_{l.Lip}(\mathbb{R}^n).
$$

A random vector $Y\in \mathcal{H}^n$ is said to be independent of $X\in \mathcal{H}^m$ if, for
each $\phi\in C_{l.Lip}(\mathbb{R}^m\times\mathbb{R}^n)$, we have
$$
\mathbb{E}[\phi(X,Y)]=\mathbb{E}[\mathbb{E}[\phi(x,Y)]_{x=X}].
$$
If the above equality holds only for a specific $\phi$, then we say that $Y$ is uncorrelated
to $X$ with respect to this function $\phi$. Under a sublinear expectation $\mathbb{E}$, the independence of $Y$ from $X$ means that the uncertainty of distributions of $Y$ does
not change with each realization of $X(\omega)=x,\ x\in\mathbb{R}^n$. It is important to note that
{
  this independence under sublinear  expectation is not symmetric;  in
  particular
  ``$Y$ is independent of $X$" is not equivalent to
  ``$X$ is independent of $Y$." For illustration we consider an
  example that  generalizes the one in \cite{Peng2010}.
}

{
\begin{example}
Let \gai{$\xi,X,Y,Z\in \mathcal{H}$ be identically distributed
  satisfying}
$\mathbb{E}[\xi]=\mathbb{E}[-\xi]=0$ and
${\overline{\sigma}}^2=\mathbb{E}[\xi^2]>{\underline{\sigma}}^2
=-\mathbb{E}[-\xi^2]$. We suppose $\mathbb{E}[\left|\xi \right|]
=\mathbb{E}[\xi^++\xi^-]>0$, thus
$$
\mathbb{E}[\xi^+]=\frac{1}{2}
\mathbb{E}[\left| \xi\right|+\xi]=\frac{1}{2}\mathbb{E}[\left| \xi\right|]>0,
$$
and
$$
\mathbb{E}[\xi^-]=\frac{1}{2}
\mathbb{E}[\left| \xi\right|-\xi]=\frac{1}{2}\mathbb{E}[\left| \xi\right|]>0.
$$
Assume that  $Z$ is independent of $\{X,Y\}$ and $Y$ is
independent of   $X$, we can show that
$$
\mathbb{E}[XY]=0,\ \mathbb{E}[\left|XY\right|]=(\mathbb{E}[\left|X\right|])^2>0,\
\mathbb{E}[(XY)^+]=\mathbb{E}[(XY)^-]=\frac{1}{2}\mathbb{E}[\left| XY\right|]>0,
$$
and
$$
\mathbb{E}[XYZ^2]=\mathbb{E}[(XY)^+{\overline{\sigma}}^2-(XY)^{-}{\underline{\sigma}}^2]
=({\overline{\sigma}}^2-{\underline{\sigma}}^2)\mathbb{E}[(XY)^+]>0.
$$
In contrast, { if we assume that} $X$ is independent of $Y$ and
$\{X,Y\}$ are independent of  $Z$, we have
$$
\mathbb{E}[XYZ^2]=\mathbb{E}[\mathbb{E}[XYz^2]_{z=Z}]
=\mathbb{E}[Z^2]\mathbb{E}[XY]=0.
$$
{ Therefore, by reversing the order of independence, we obtain
  different values for $\mathbb{E}[XYZ^2]$. This shows the asymmetry of
  the concept of independence under sublinear expectation while
  the concept is  symmetric in  classical probability
  theory.}
\end{example}
}


Finally, further properties of the G-normal distribution are
rigorously established as follows.
\begin{proposition}\label{prop2}
  The solution of the fully nonlinear PDE (\ref{pde-1}) with Cauchy initial
condition (\ref{pdecd-2}) has the following explicit expression;
$$
u(t,x)=\int_{-\infty}^x\rho(t,y)dy,
$$
where $\rho(t,y)$ is a function on $\mathbb{R_+}\times\mathbb{R}$ defined by
$$
\rho(t,y)=\frac{\sqrt{2}}{ (\overline{\sigma}+\underline{\sigma})
  \sqrt{\pi t}}
\left[  e^{- \frac{y^2}{2\overline{\sigma}^2t} }
  I( y\leq 0) +  e^{-  \frac{y^2}{2\underline{\sigma}^2t   } } I(y> 0)
\right].
$$
\end{proposition}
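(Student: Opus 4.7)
The plan is to verify by direct computation that the candidate $u(t,x)=\int_{-\infty}^x\rho(t,y)\,dy$ is a classical solution of the $G$-heat equation on each open half-plane $\{x<0\}$ and $\{x>0\}$, glues together smoothly enough across $x=0$, and attains the Heaviside initial datum as $t\to 0^+$. The key observation is that $\rho(t,\cdot)$ is a piecewise Gaussian kernel: on $\{x\le 0\}$ it equals $\tfrac{2\overline{\sigma}}{\overline{\sigma}+\underline{\sigma}}\varphi_{\overline{\sigma}^2 t}(x)$ and on $\{x>0\}$ it equals $\tfrac{2\underline{\sigma}}{\overline{\sigma}+\underline{\sigma}}\varphi_{\underline{\sigma}^2 t}(x)$, where $\varphi_v$ denotes the centered Gaussian density of variance $v$. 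The prefactors are tuned so that $\rho(t,0^-)=\rho(t,0^+)$, and a direct differentiation yields $\partial_x\rho(t,0^\pm)=0$, so both $\rho$ and $\partial_x\rho$ are continuous across the origin for every $t>0$.

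Since $u$ is the antiderivative of $\rho$ in $x$, one has $\partial_x u=\rho$ and $\partial_{xx}^2 u=\partial_x\rho$. A direct calculation gives $\partial_{xx}^2 u(t,x)=-\tfrac{x}{\overline{\sigma}^2 t}\rho(t,x)\ge 0$ for $x\le 0$ and $\partial_{xx}^2 u(t,x)=-\tfrac{x}{\underline{\sigma}^2 t}\rho(t,x)\le 0$ for $x\ge 0$. By the definition~\eqref{eq:G-function} of $G$, this means $G(\partial_{xx}^2 u)=\tfrac12\overline{\sigma}^2\partial_{xx}^2 u$ on $\{x<0\}$ and $G(\partial_{xx}^2 u)=\tfrac12\underline{\sigma}^2\partial_{xx}^2 u$ on $\{x>0\}$. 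On each half-plane $\rho$ is a scaled Gaussian kernel and therefore satisfies the linear heat equation $\partial_t\rho=\tfrac{\overline{\sigma}^2}{2}\partial_{xx}^2\rho$ (respectively with $\underline{\sigma}$). Integrating this identity in $y$ up to $x$ and using $\partial_y\rho(t,0^\pm)=0$ to kill any boundary term at the origin, I obtain $\partial_t u=\tfrac{\overline{\sigma}^2}{2}\partial_x\rho$ on $\{x<0\}$ and $\partial_t u=\tfrac{\underline{\sigma}^2}{2}\partial_x\rho$ on $\{x>0\}$, matching $G(\partial_{xx}^2 u)$ in both cases; both sides vanish at $x=0$, so the identity extends continuously across the interface.

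For the initial condition, the integral $\int_{-\infty}^0\rho(t,y)\,dy$ evaluates by rescaling to the time-independent mass $\tfrac{\overline{\sigma}}{\overline{\sigma}+\underline{\sigma}}$, while for $x>0$ the substitution $y=\underline{\sigma}\sqrt{t}\,z$ in $\int_0^x\rho(t,y)\,dy$ shows this piece tends to $\tfrac{\underline{\sigma}}{\overline{\sigma}+\underline{\sigma}}$ as $t\to 0^+$; summing gives $u(t,x)\to 1$, and symmetrically $u(t,x)\to 0$ for $x<0$, which is the required Heaviside limit. The one delicate point, and the main technical obstacle, is the joint treatment of the two Gaussian branches at the interface $x=0$: the matching of both $\rho$ and $\partial_x\rho$ there is precisely what makes $\partial_{xx}^2 u$ continuous with value $0$ at the origin, so that the nonlinearity $G$ is evaluated with the same sign-determined branch from both sides. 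This is what allows the PDE to be interpreted classically for every $t>0$, and no argument beyond the general existence-uniqueness results for the $G$-heat equation recalled in Appendix~A is needed to conclude.
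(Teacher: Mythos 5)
Your proof is correct and follows essentially the same route as the paper's: verify the linear heat equation on each half-line $\{x<0\}$ and $\{x>0\}$ where the sign of $\partial_{xx}^2u$ selects the $\overline{\sigma}$ or $\underline{\sigma}$ branch of $G$, then use the continuity of $\rho$ and $\partial_x\rho$ (both one-sided limits of $\partial_x\rho$ vanishing at the origin) to extend the identity across $x=0$, and check the Heaviside limit by concentration of the Gaussian mass. You simply supply the explicit computations that the paper delegates to ``the classical approach of heat equations,'' so no further comment is needed.
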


{
\begin{proof}
 \gai{Using classical analysis of the heat equation, we can prove that} 
  $\displaystyle\lim_{t\to 0} u(t,x)=1_{(0,\infty)}(x)$ for each $x$. If $x<0$, we have $\rho(t,x)=\frac{\sqrt{2}}{ (\overline{\sigma}+\underline{\sigma})
  \sqrt{\pi t}}
 e^{- \frac{x^2}{2\overline{\sigma}^2t} }
  $ and  can obtain that
    \begin{equation}
    \begin{aligned}
    &\partial_tu(t,x)=\partial_x\rho(t,x)=\frac{-x}{ (\overline{\sigma}+\underline{\sigma})
  \sqrt{2\pi t^3}}
 e^{- \frac{x^2}{2\overline{\sigma}^2t} },\\
      &\partial_xu(t,x)=\rho(t,x)=\frac{\sqrt{2}}{ (\overline{\sigma}+\underline{\sigma})
  \sqrt{\pi t}}
 e^{- \frac{x^2}{2\overline{\sigma}^2t} },\\
      &\partial_{xx}^2u(t,x)=\partial_x\rho(t,x)=\frac{-\sqrt{2}x}{ (\overline{\sigma}+\underline{\sigma}){\overline{\sigma}}^2
  \sqrt{\pi t^3}}
 e^{- \frac{x^2}{2\overline{\sigma}^2t} }.
    \end{aligned}
  \end{equation}
  Thus, we can verify
  $$
  \partial_tu(t,x)-G(\partial_{xx}^2u(t,x))=0,\ x<0,t>0.
  $$
 In a similar manner, we have
   $$
  \partial_tu(t,x)-G(\partial_{xx}^2u(t,x))=0,\ x<0,t>0.
  $$
Note that $\partial_{xx}^2u(t,x)$ is continuous on $(0,\infty)\times \mathbb{R}$, and $\partial_{xx}^2u(t,0)=0, \partial_tu(t,0)=0$, thus, one obtains,
$$
\partial_tu(t,x)-G(\partial_{xx}^2u(t,x))=0,\ x=0,t>0.
$$
Consequently, $u(t,x)$ solves the PDE (\ref{pde-1}) on the entire
  $(0,\infty)\times \mathbb{R}$ in the sense of classical solution.
\end{proof}
}

\vskip2cm


\begin{figure}[H]
  \centering
  \includegraphics[width=3.8 in]{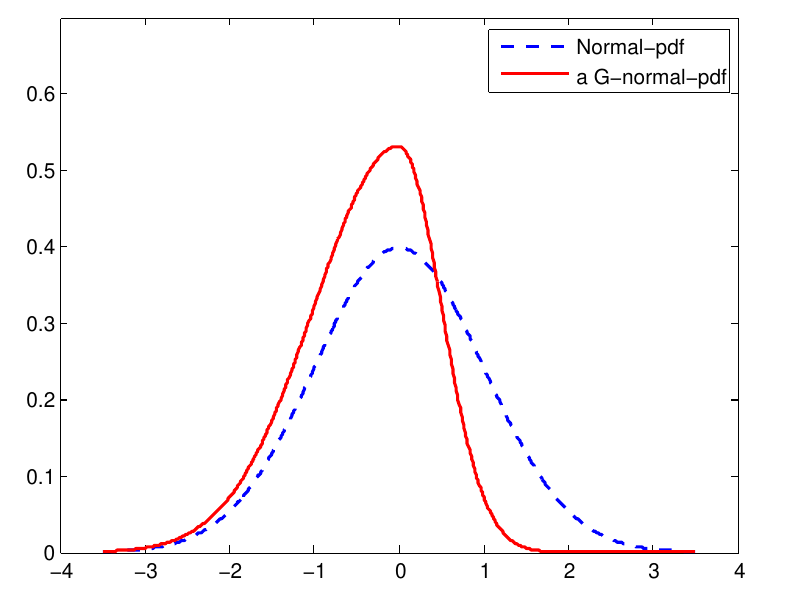}%
  \caption{\label{gnorm1}Density of one distribution of  G-normal distribution  with
    variance parameters $(\protect\underline{\sigma},\protect\overline{\sigma})=(0.5,1)$
    in comparison    with  standard    normal density.}
\end{figure}

\newpage\thispagestyle{empty}
\begin{figure}[H]
  \centering
  \includegraphics[width=4.2 in]{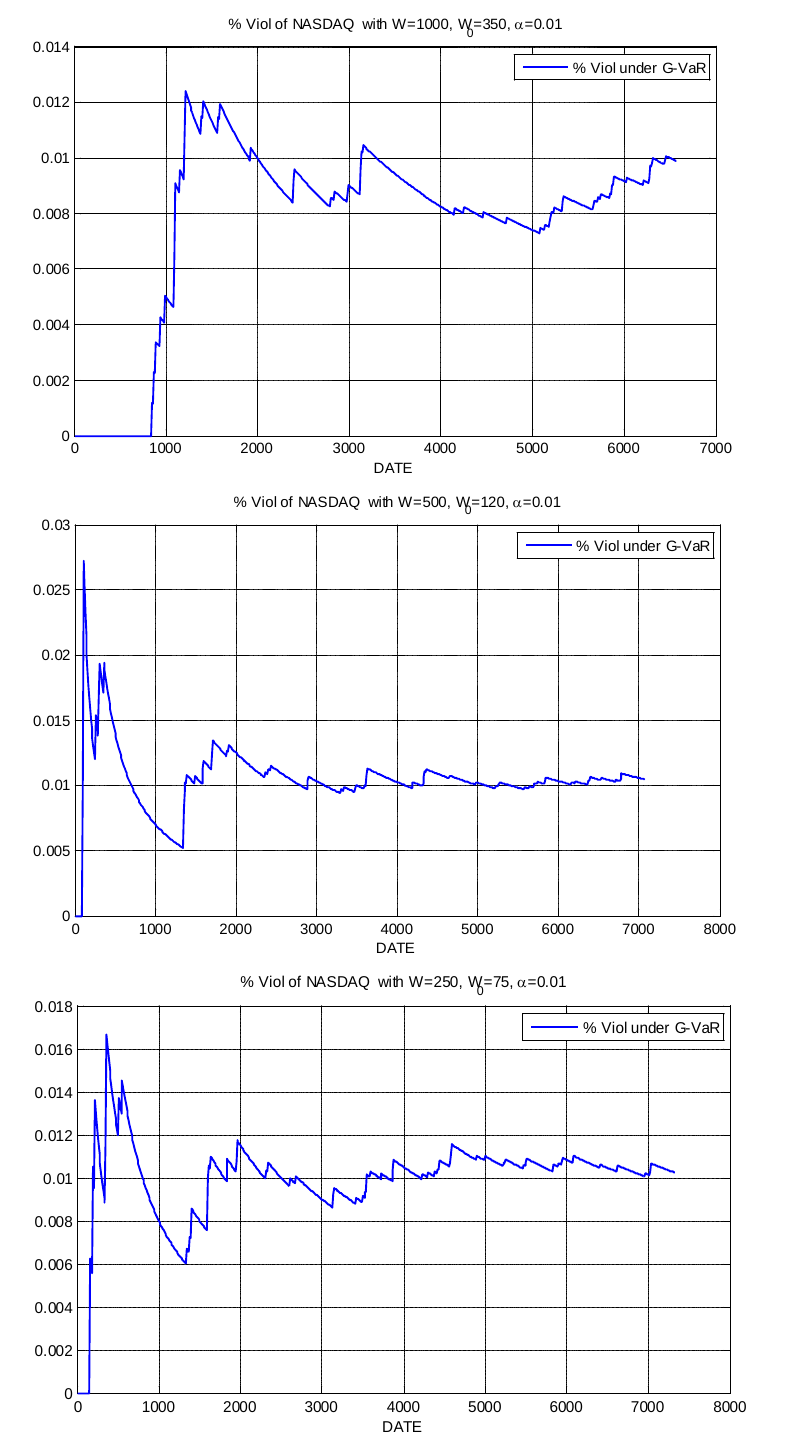}%
  \caption{NASDAQ Composite Index: convergence of the violation rate \%Viol
    for $W=1000,500,250$, and $\alpha=0.01$.
   The  adaptive window sizes are  $W_0=350,120,75$, respectively.  \label{nvio}}
\end{figure}

\newpage\thispagestyle{empty}
\begin{figure}[H]
  \centering
  \includegraphics[width=4.0 in]{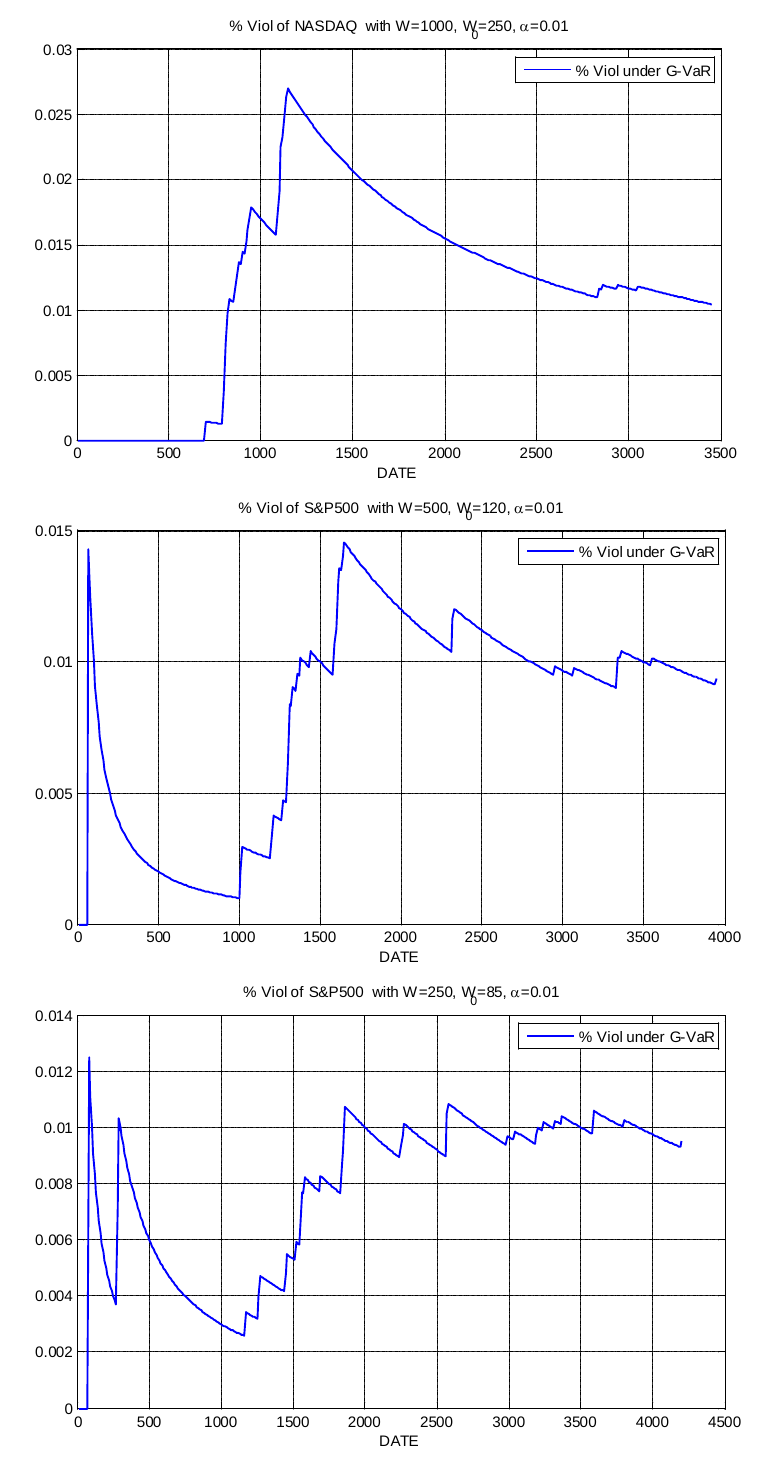}%
  \caption{S\&P500 Index: convergence of the violation rate \%Viol
    for $W=1000,500,250$,  and $\alpha=0.01$.
    These adaptive window sizes are  $W_0=250,120,85$, respectively.  \label{svio}}
\end{figure}

\newpage\thispagestyle{empty}
\begin{figure}[H]
  \centering
\includegraphics[width=3.8 in]{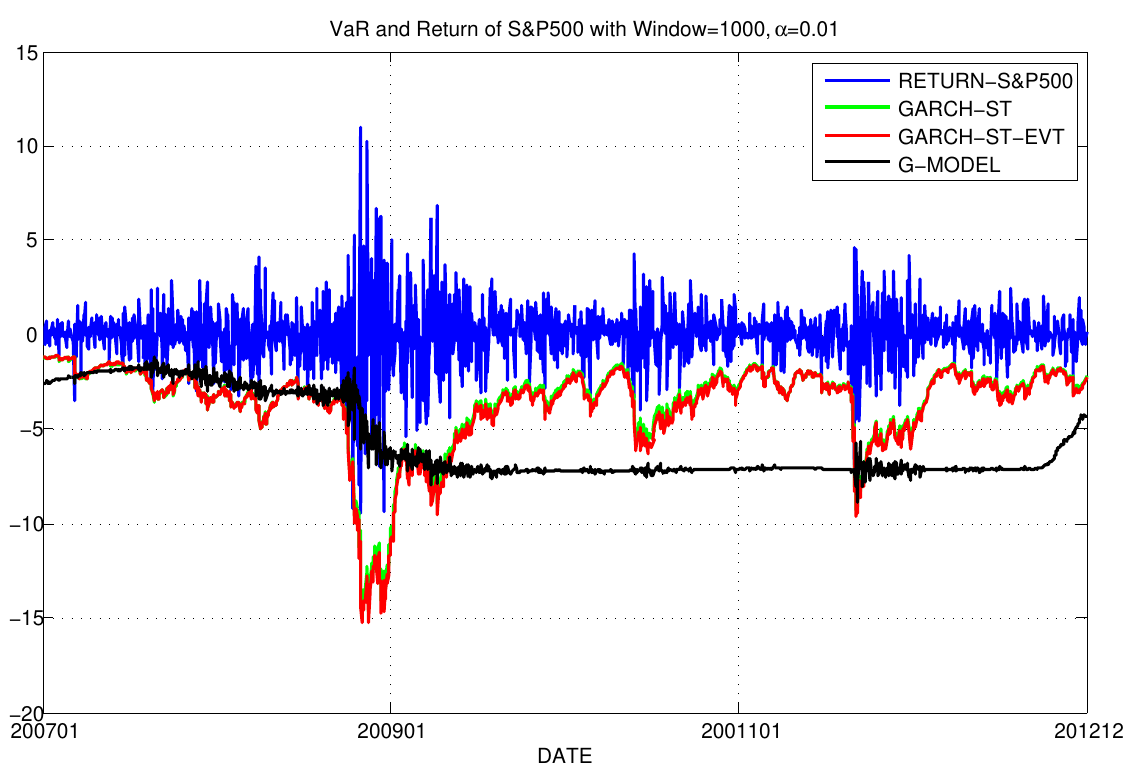}\\
\includegraphics[width=3.8 in]{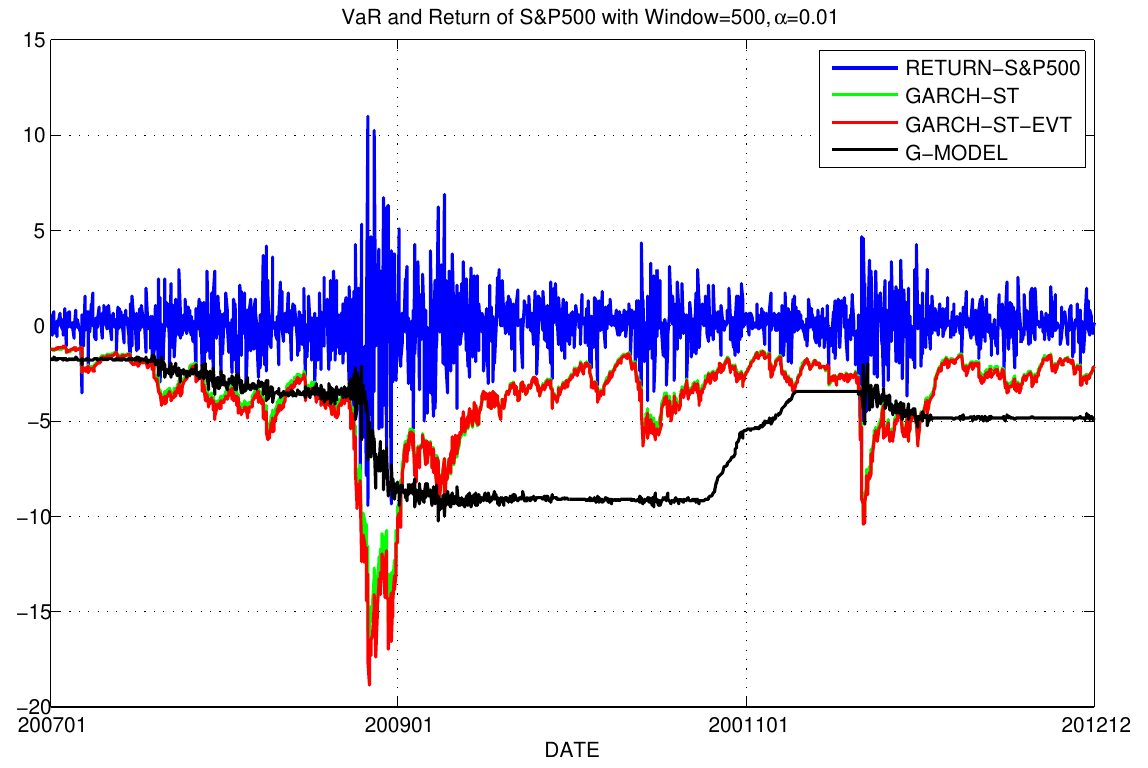}\\
\includegraphics[width=3.8 in]{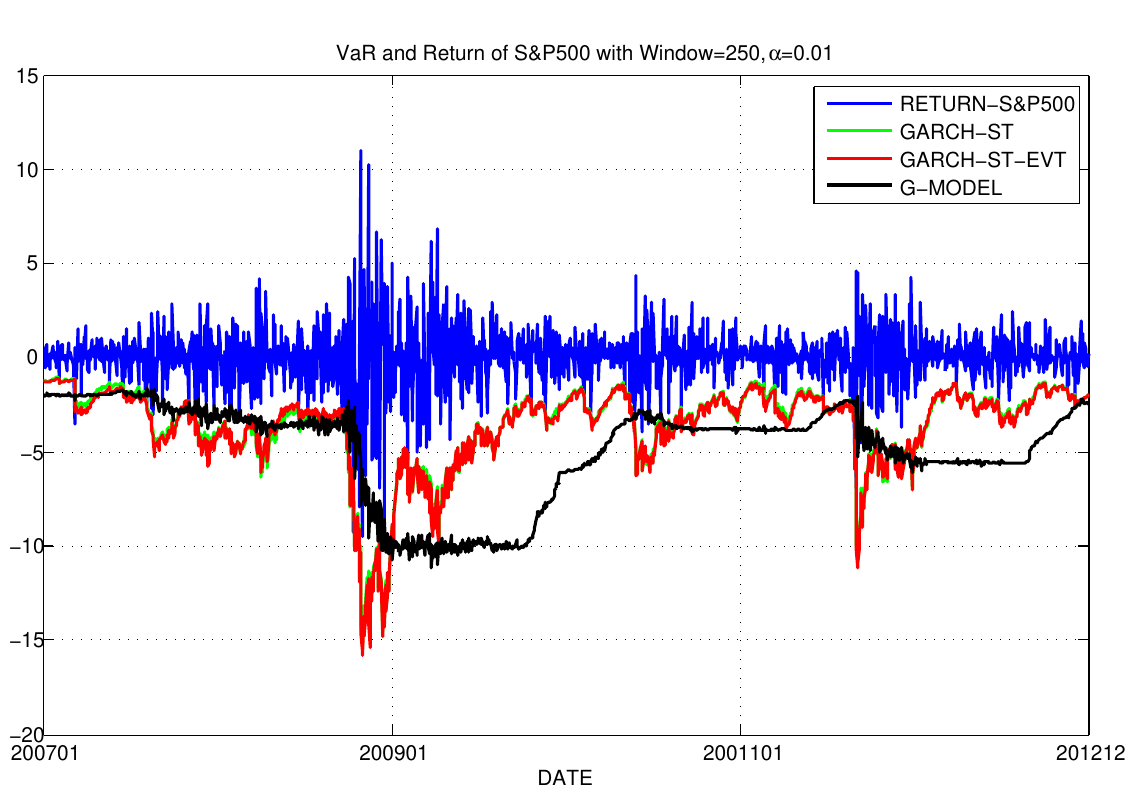}\\
  \caption{For a given $W=1000,500,250$, the VaR performance  of
    different models at risk level $\alpha=0.01$..  \label{sw1}}
\end{figure}

\newpage\thispagestyle{empty}
\begin{figure}[H]
  \centering
\includegraphics[width=3.8 in]{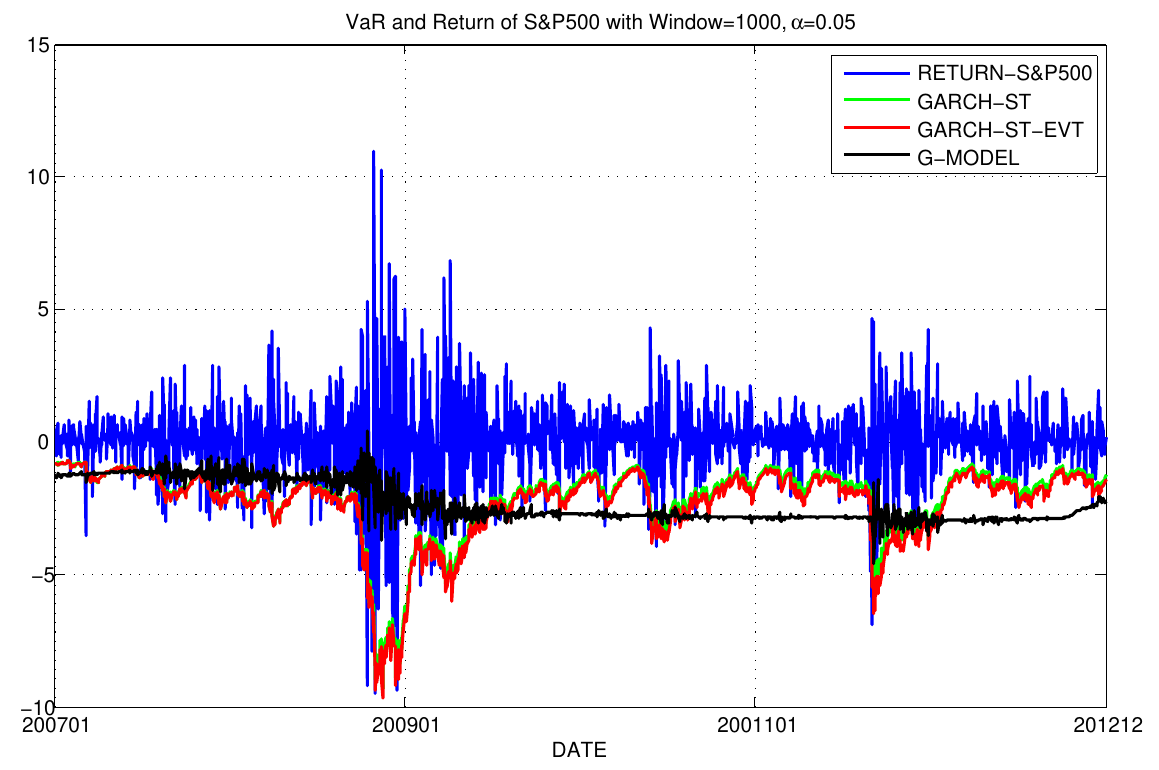}\\
\includegraphics[width=3.8 in]{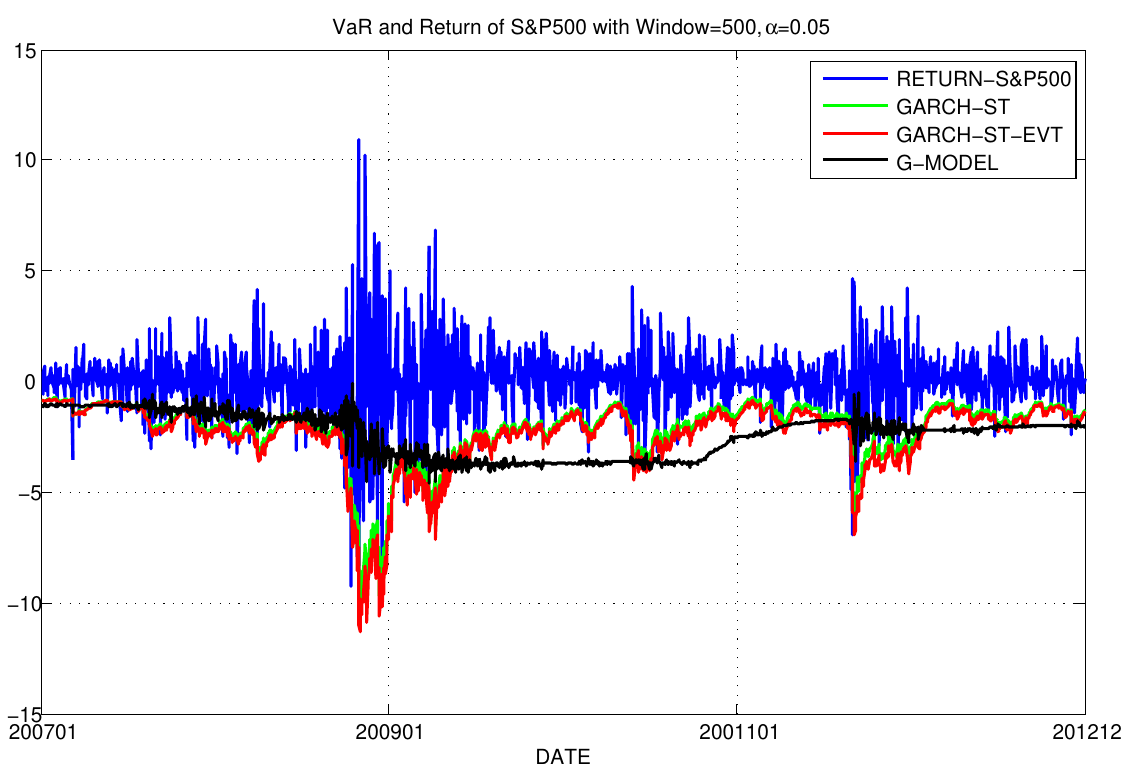}\\
\includegraphics[width=3.8 in]{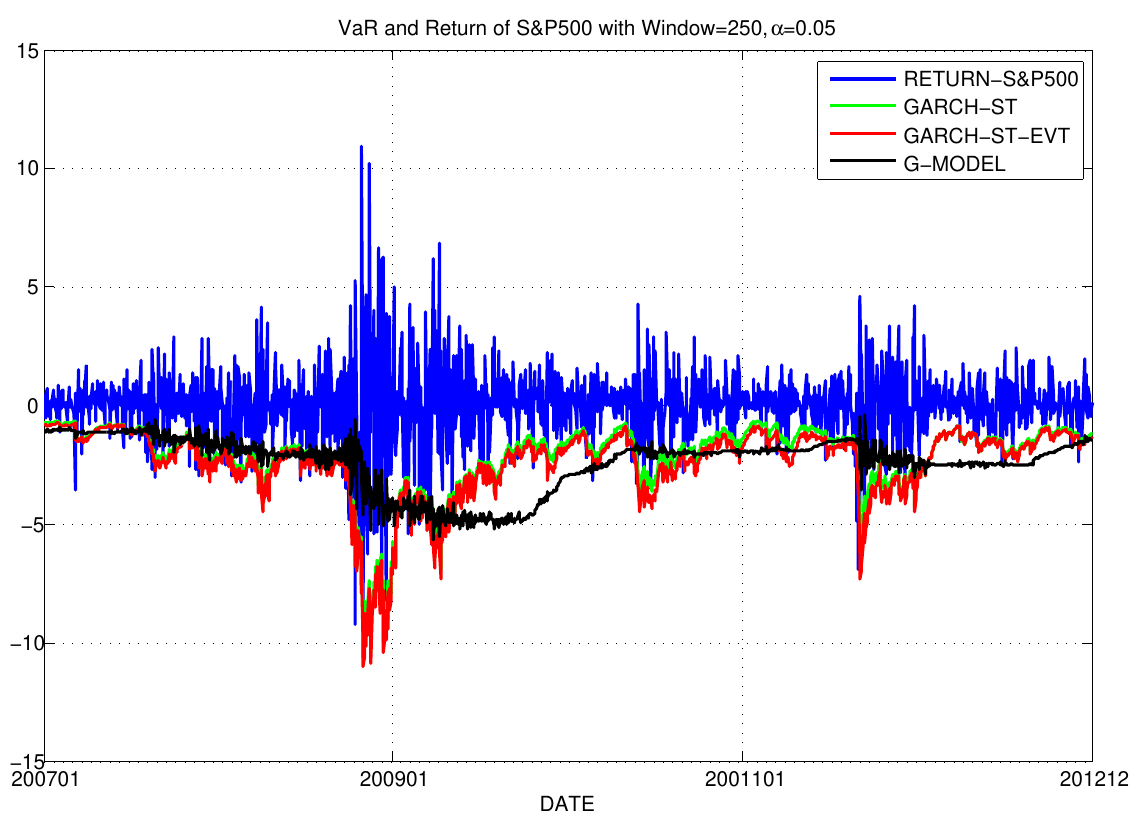}
  \caption{For a given $W=1000,500,250$, the VaR performance of
    different models  at risk level $\alpha=0.05$.  \label{sw5}}
\end{figure}

\newpage
\begin{figure}[H]
  \centering
\includegraphics[width=3.6 in]{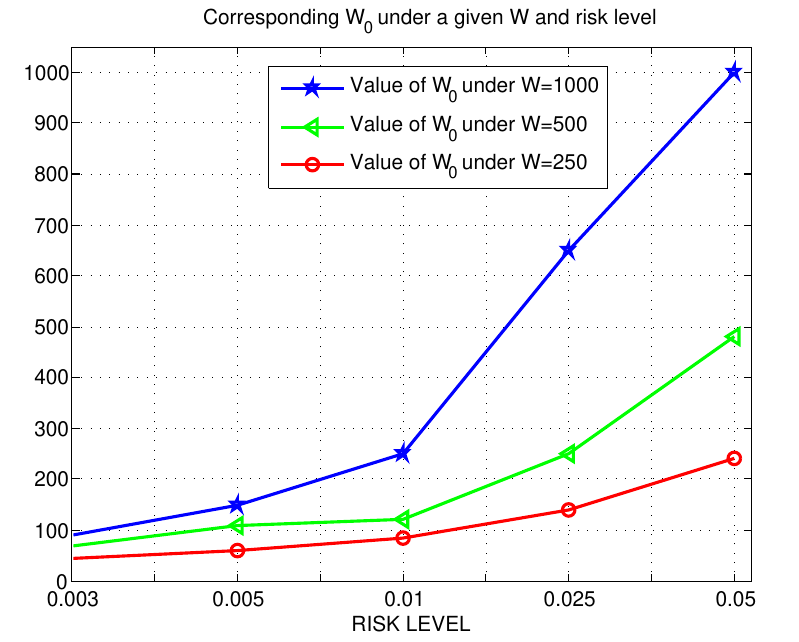}
  \caption{S\&P 500 Index data: variation of  adaptive window
    $W_0$ for different risk levels $\alpha$ and historical windows $W$.  \label{w0}}
\end{figure}

\begin{figure}[H]
  \centering
\includegraphics[width=3.7 in]{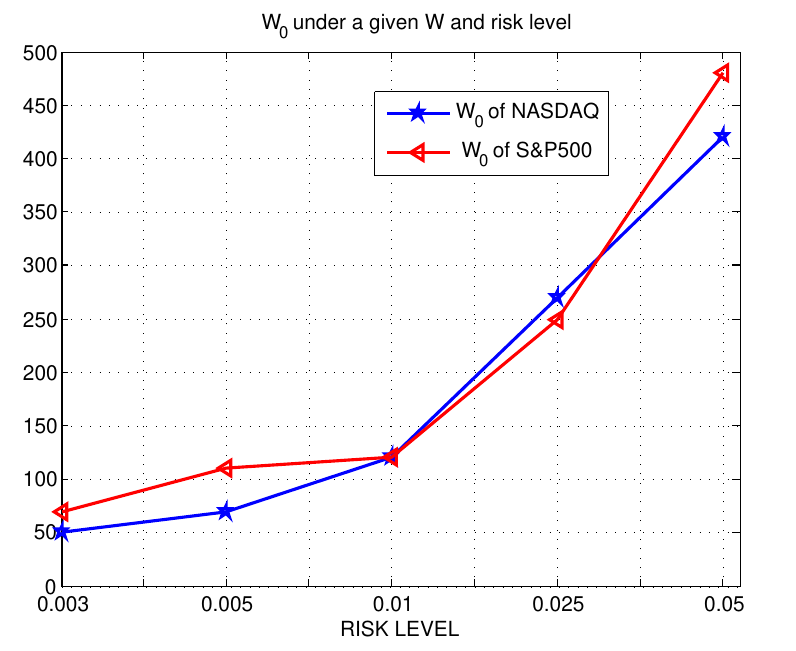}
  \caption{For a given $W=500,\alpha=0.003,0.005,0.01,0.025,0.05$, $W_0$ of NASDAQ  and S\&P500 indexes.  \label{np500}}
\end{figure}


\newpage

\begin{table}[H]
  \centering
  \caption{NASDAQ Composite Index:  Average  and standard deviations of
    \%Viol with  $W$=1000, 500, 250 and  $\alpha=0.01$.}
  \label{table:ns}
  \begin{tabular}{lccccc}
    \toprule
     Model &Window $W$ & Window $W_0$ & Average value & Standard deviation\\
    \midrule
    G-VaR:&
            1000
                          &  350
                                         &  0.0087
                                                         &8.1160e-04\\
              &500
                          &  120
                                         & 0.0103
                                                         & 3.9274e-04\\
              &250
                          &  75
                                         &  0.0104
                                                         &  5.9560e-04\\
    \bottomrule
    \hline
  \end{tabular}
\end{table}

\newpage
\begin{table}[H]
  \centering
  \caption{\small NASDAQ Composite Index: Empirical statistics of  G-VaR
    forecast compared with forecasts of  three benchmark predictors
    reported in
    \cite{Kuester06} with   $W$=1000. The bottom plot shows the $\text{LR}_{uc}$, where $\text{LR}_{uc}$ is the p-values of the Binomial test.}
  \label{table:ns1000}
  \begin{tabular}{lcccc}
    \toprule
    Model & 100$\alpha$ & \%Viol. & $\text{LR}_{uc}$& $100\overline{\text{VaR}}$\\
    \midrule
    AR(1)-GARCH(1,1)-N: $\left\{\tabincell{c}{  \\ \\ \\  } \right.$
          &  \tabincell{c}{ 1 \\ 2.5 \\ 5}
          & \tabincell{c}{ 2.23\\3.92 \\  6.21}
          &\tabincell{c}{0.00\\ 0.00 \\ 0.00}
          & \tabincell{c}{2.05\\1.72 \\ 1.43} \\
    AR(1)-GARCH(1,1)-St: $\left\{\tabincell{c}{  \\  \\ \\} \right.$
          & \tabincell{c}{ 1\\ 2.5\\  5}
          & \tabincell{c}{ 1.2\\ 2.72\\ 5.12}
          &\tabincell{c}{ 0.12\\ 0.25 \\ 0.65}
          & \tabincell{c}{ 2.57\\2.01  \\ 1.59} \\
    AR(1)-GARCH(1,1)-St-EVT: $\left\{\tabincell{c}{  \\ \\ \\  } \right.$
          & \tabincell{c}{ 1\\ 2.5\\  5}
          & \tabincell{c}{ 0.97\\2.47 \\ 5.06}
          &\tabincell{c}{ 0.82\\ 0.87\\0.82}
          & \tabincell{c}{ 2.70\\ 2.07\\ 1.61} \\
    G-VaR:\quad $W_0$=$\left\{ \tabincell{c}{ 350\\650 \\ 900} \right.$
          &  \tabincell{c}{ 1\\ 2.5\\ 5}
          & \tabincell{c}{ 0.99\\ 2.51\\ 5.03}
          &\tabincell{c}{ 0.93 \\ 0.96\\ 0.90 }
          & \tabincell{c}{2.78\\ 2.06\\ 1.52 }\\
    \bottomrule
    \hline
  \end{tabular}
  \vskip5mm
  \includegraphics[width=4.0 in]{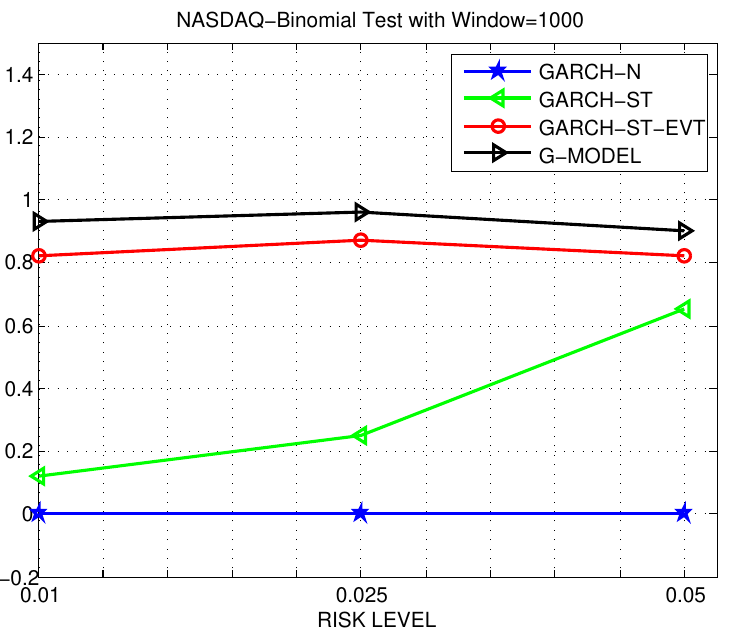}%
\end{table}

\newpage
\begin{table}[H]
\centering
\caption{NASDAQ Composite Index with W=500, 250}
\label{table:ns500250}
\begin{tabular}{lccccc}
\toprule
Model & 100$\alpha$ & \%Viol. & $\text{LR}_{uc}$& $100\overline{\text{VaR}}$\\
\midrule
$W$=500:\quad $W_0$=$\left\{ \tabincell{c}{50\\ 70 \\ 120\\270 \\ 420} \right.$
        &  \tabincell{c}{0.3\\ 0.5 \\ 1\\ 2.5\\ 5}
         & \tabincell{c}{0.30\\ 0.49\\ 1.05\\ 2.54\\ 5.00}
         &\tabincell{c}{0.96\\ 0.95\\ 0.70 \\ 0.81\\ 0.99 }
         & \tabincell{c}{4.71\\ 4.00 \\ 3.11\\ 2.14\\ 1.60 }\\
$W$=250:\quad $W_0$=$\left\{ \tabincell{c}{35\\ 50 \\ 75\\150 \\ 210} \right.$
        &  \tabincell{c}{0.3\\ 0.5 \\ 1\\ 2.5\\ 5}
         & \tabincell{c}{0.30\\ 0.52\\ 1.02\\ 2.49\\ 5.05}
         &\tabincell{c}{1.00\\ 0.82\\ 0.84 \\ 0.93\\ 0.84 }
         & \tabincell{c}{4.29\\ 3.67 \\ 2.98\\ 2.12\\ 1.61 }\\
\bottomrule
\hline
\end{tabular}
\end{table}

\vskip2cm
\begin{table}[H]
\centering
\caption{S\&P500 Index:  Average  and standard deviations of \%Viol
  with
  $W$=1000, 500, 250 and  $\alpha=0.01$}
\label{table:sp}
\begin{tabular}{lccccc}
\toprule
   Model &Window $W$ & Window $W_0$ & Average value & Standard deviation\\
  \midrule
  G-VaR:& 1000        &  250         &  0.0111         &4.1705e-04\\
            &500        &  120         & 0.0097         & 3.6689e-04\\
            &250        &  85         &  0.0099         &     3.0967e-04\\
  \bottomrule
  \hline
\end{tabular}
\end{table}

\newpage\thispagestyle{empty}
\begin{table}[H]
\centering
\caption{\small S\&P 500 Index: Empirical statistics of VaR
  forecasts from G-VaR and three benchmark predictors with  $W$=1000. The bottom plot shows the $\text{LR}_{uc}$.}
\label{table:sp1000}
\begin{tabular}{lccccc}
\toprule
Model & 100$\alpha$ & \%Viol. & $\text{LR}_{uc}$& $100\overline{\text{VaR}}$\\
\midrule

AR(1)-GARCH(1,1)-N: $\left\{\tabincell{c}{ \\  \\  \\ \\ \\  } \right.$
         &  \tabincell{c}{0.3\\ 0.5 \\ 1 \\ 2.5 \\ 5}
         & \tabincell{c}{1.15\\ 1.55 \\ 2.42\\3.83 \\  6.08}
         &\tabincell{c}{0.00\\ 0.00 \\ 0.00\\ 0.00 \\ 0.00}
         & \tabincell{c}{2.64\\ 2.47 \\ 2.23\\1.87 \\ 1.56} \\
AR(1)-GARCH(1,1)-St: $\left\{\tabincell{c}{ \\  \\  \\  \\ \\} \right.$
        & \tabincell{c}{0.3\\ 0.5 \\ 1\\ 2.5\\  5}
        & \tabincell{c}{0.28\\ 0.73 \\ 1.32\\ 3.24\\ 5.71}
        &\tabincell{c}{0.83\\ 0.07\\ 0.07\\ 0.01 \\ 0.06}
        & \tabincell{c}{3.42\\ 3.06 \\ 2.61\\2.03  \\ 1.60} \\
AR(1)-GARCH(1,1)-St-EVT: $\left\{\tabincell{c}{ \\  \\  \\ \\ \\  } \right.$
        & \tabincell{c}{0.3\\ 0.5 \\ 1\\ 2.5\\  5}
        & \tabincell{c}{0.39\\ 0.62 \\ 1.21\\ 2.79\\ 4.73}
        &\tabincell{c}{0.33\\ 0.33\\ 0.33\\0.28 \\  0.45}
        & \tabincell{c}{3.38\\ 3.10 \\ 2.70\\2.16 \\1.72} \\
G-VaR:\quad $W_0$=$\left\{ \tabincell{c}{90\\ 150 \\ 250\\650 \\ 1000} \right.$
        &  \tabincell{c}{0.3\\ 0.5 \\ 1\\ 2.5\\ 5}
         & \tabincell{c}{0.29\\ 0.52\\ 1.07\\ 2.49\\ 4.87}
         &\tabincell{c}{0.91\\ 0.86\\ 0.68 \\ 0.97\\ 0.72 }
         & \tabincell{c}{7.05\\ 5.77 \\ 4.40\\ 2.91\\ 1.94 }\\
\bottomrule
\hline
\end{tabular}\\
\includegraphics[width=3.8 in]{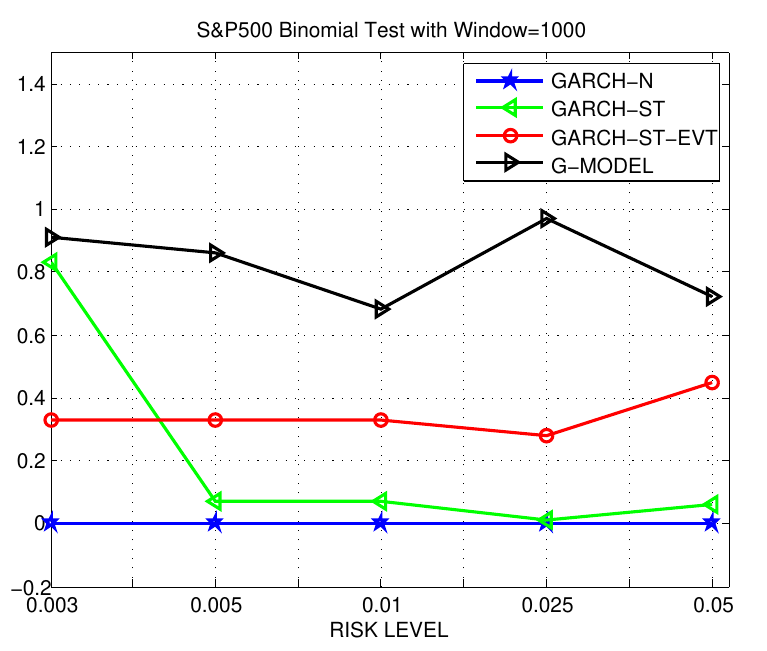}
\end{table}

\newpage\thispagestyle{empty}
\begin{table}[H]
\centering
\caption{\small S\&P 500 Index: Empirical statistics of VaR
  forecasts from G-VaR and three benchmark predictors with   $W$=500. The bottom plot shows the $\text{LR}_{uc}$.}
\label{table:sp500}
\begin{tabular}{lccccc}
\toprule
Model & 100$\alpha$ & \%Viol. & $\text{LR}_{uc}$& $100\overline{\text{VaR}}$\\
\midrule

AR(1)-GARCH(1,1)-N: $\left\{\tabincell{c}{ \\  \\ \\  \\ \\  } \right.$
         &  \tabincell{c}{0.3\\ 0.5 \\ 1 \\ 2.5 \\ 5}
         & \tabincell{c}{1.06\\ 1.38 \\ 2.22\\ 3.82\\  5.90}
         &\tabincell{c}{0.00\\ 0.00 \\ 0.00\\ 0.00 \\ 0.01}
         & \tabincell{c}{2.75\\ 2.58\\ 2.32\\ 1.95\\ 1.63} \\
AR(1)-GARCH(1,1)-St: $\left\{\tabincell{c}{ \\  \\ \\ \\   \\} \right.$
        & \tabincell{c}{0.3\\ 0.5 \\ 1\\ 2.5\\  5}
        & \tabincell{c}{0.27\\ 0.59 \\ 1.18\\ 3.13\\ 5.67}
        &\tabincell{c}{0.74\\ 0.42\\ 0.25\\ 0.01 \\ 0.05}
        & \tabincell{c}{3.51\\ 3.16 \\ 2.69\\2.11  \\ 1.67} \\
AR(1)-GARCH(1,1)-St-EVT: $\left\{\tabincell{c}{ \\  \\ \\  \\ \\  } \right.$
        & \tabincell{c}{0.3\\ 0.5 \\ 1\\ 2.5\\  5}
        & \tabincell{c}{0.37\\ 0.62 \\ 1.06\\ 2.57\\ 5.01}
        &\tabincell{c}{0.43\\ 0.31\\ 0.70\\ 0.79 \\ 0.98}
        & \tabincell{c}{3.44\\ 3.16 \\ 2.77\\ 2.23  \\ 1.80} \\
G-VaR:\quad $W_0$=$\left\{ \tabincell{c}{70\\ 110 \\ 120\\ 250 \\ 480} \right.$
        &  \tabincell{c}{0.3\\ 0.5 \\ 1\\ 2.5\\ 5}
         & \tabincell{c}{0.33\\ 0.51 \\ 0.96\\2.48 \\ 5.08}
         &\tabincell{c}{0.74\\ 0.96\\ 0.81 \\ 0.90\\ 0.81 }
         & \tabincell{c}{5.50\\ 4.58 \\ 4.08\\ 2.79\\ 1.90 }\\
\bottomrule
\hline
\end{tabular}
\vskip5mm
\includegraphics[width=3.8 in]{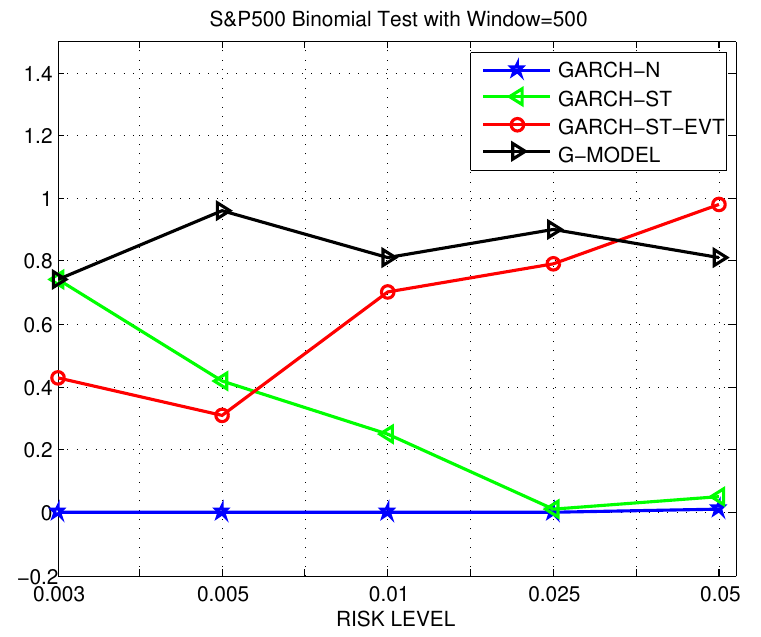}
\end{table}

\newpage\thispagestyle{empty}
\begin{table}[H]
\centering
\caption{\small S\&P 500 Index: Empirical statistics of VaR
  forecasts from G-VaR and three benchmark predictors with   $W$=250. The bottom plot shows the $\text{LR}_{uc}$.}
\label{table:sp250}
\begin{tabular}{lccccc}
\toprule
Model & 100$\alpha$ & \%Viol. & $\text{LR}_{uc}$& $100\overline{\text{VaR}}$\\
\midrule

AR(1)-GARCH(1,1)-N: $\left\{\tabincell{c}{ \\  \\  \\ \\ \\  } \right.$
         &  \tabincell{c}{0.3\\ 0.5 \\ 1 \\ 2.5 \\ 5}
         & \tabincell{c}{1.30\\ 1.72 \\ 2.56\\ 4.09\\  6.23}
         &\tabincell{c}{0.00\\ 0.00 \\ 0.00\\ 0.00 \\ 0.00}
         & \tabincell{c}{2.78\\ 2.61 \\ 2.35\\ 1.97\\ 1.65} \\
AR(1)-GARCH(1,1)-St: $\left\{\tabincell{c}{ \\    \\ \\ \\ \\} \right.$
        & \tabincell{c}{0.3\\ 0.5 \\ 1\\ 2.5\\  5}
        & \tabincell{c}{0.40\\ 0.70 \\ 1.39\\ 3.3\\ 5.95}
        &\tabincell{c}{0.28\\ 0.08\\ 0.01\\ 0.00 \\ 0.01}
        & \tabincell{c}{3.47\\ 3.13 \\ 2.68\\ 2.12 \\ 1.69} \\
AR(1)-GARCH(1,1)-St-EVT: $\left\{\tabincell{c}{ \\  \\ \\ \\ \\  } \right.$
        & \tabincell{c}{0.3\\ 0.5 \\ 1\\ 2.5\\  5}
        & \tabincell{c}{0.65\\ 0.86 \\ 1.40\\2.90 \\ 5.18}
        &\tabincell{c}{0.00\\ 0.00\\ 0.01\\0.10 \\ 0.59}
        & \tabincell{c}{3.32\\ 3.06 \\ 2.71\\2.22 \\ 1.80} \\
G-VaR:\quad $W_0$=$\left\{ \tabincell{c}{45\\ 60 \\ 85\\ 140 \\ 240} \right.$
        &  \tabincell{c}{0.3\\ 0.5 \\ 1\\ 2.5\\ 5}
         & \tabincell{c}{0.29\\ 0.48 \\ 0.98\\ 2.55\\ 4.95}
         &\tabincell{c}{0.86\\ 0.82\\ 0.87 \\ 0.85\\ 0.88 }
         & \tabincell{c}{4.73\\ 4.16 \\ 3.46\\ 2.57\\ 1.83 }\\
\bottomrule
\hline
\end{tabular}
\vskip 5mm
\includegraphics[width=3.8 in]{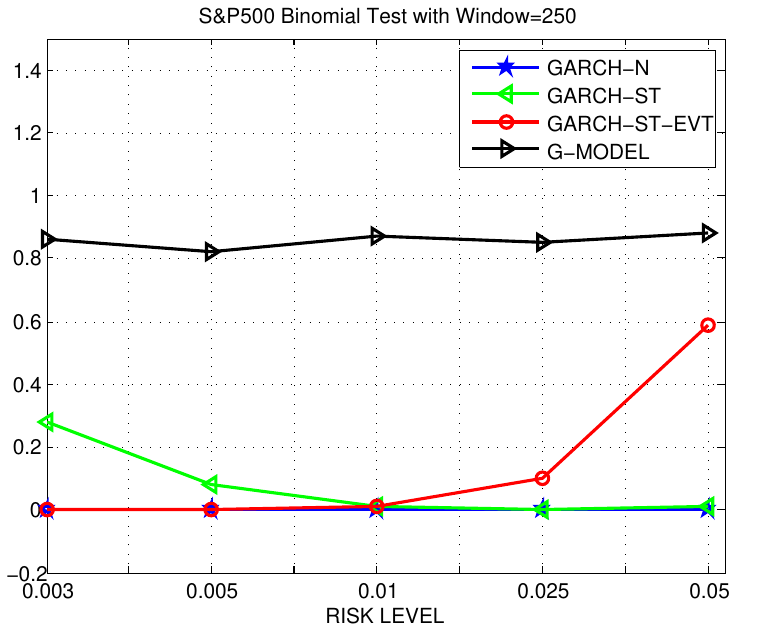}
\end{table}

\end{document}